\renewcommand{\cite}{\citep}
\def \trans#1	{\stackrel{#1}{\longrightarrow}}
\def \supp		{\mathrm{supp}}
\def \pr#1		{\empty_{#1}\!\| }
\def \pc#1		{\empty_{#1}\!\!\oplus}
\def \flip		{\mathtt{flip}}
\def \ars 		{\ar@{.>}}
\def \praut		{p\textrm{-}\mathbf{Aut}}
\def \arrs 	{\ar@{<.}}
\def \arss		{\ar@{<.>}}
\def \eps		{\varepsilon}
\def \refby 	{\sqsubseteq}
\def \aut 		{\mathbf{Aut}}
\def \paut 		{\mathbf{ProbAut}}
\def \may#1 	{\refby_{\texttt{\tiny{may}}}^{#1}}
\def \tea		{\mathtt{tea}}
\def \coffee		{\mathtt{coffee}}
\def \coin		{\mathtt{coin}}
\def \lra		{\longrightarrow}
\def \D 		{\mathcal{D}}
\def \ttrans#1	{\stackrel{#1}{\Longrightarrow}}
\def \rf#1 {(\ref{#1})}
\theoremstyle{plain}
\newtheorem{theorem}{Theorem}
\newtheorem{proposition}[theorem]{Proposition}
\newtheorem{definition}[theorem]{Definition}
\newtheorem{lemma}[theorem]{Lemma}
\newtheorem{remark}[theorem]{Remark}
\begin{document}

\title{Weak Concurrent Kleene Algebra with Application to Algebraic Verification}
\author{Annabelle McIver$^1$
\and 
Tahiry Rabehaja$^{1}$
\and 
Georg Struth$^{3}$}
\affiliation{$^1$ Department of Computing \\
Macquarie University, \\
Sydney, Australia \\
Email:~{\tt \{annabelle.mciver,tahiry.rabehaja\}@mq.edu.au}\\[.1in]
$^2$ Department of Computer Science \\
University of Sheffield, \\
United Kingdom, \\
Email:~{\tt g.struth@dcs.shef.ac.uk}}

\maketitle

\toappear{This work has been supported by the iMQRES grant from Macquarie University.}





\begin{abstract}
We propose a generalisation of concurrent Kleene algebra \cite{Hoa09} that can take account of probabilistic effects in the presence of concurrency. The algebra is proved sound with respect to a model of automata modulo a variant of rooted $\eta$-simulation equivalence. Applicability  is demonstrated by algebraic treatments of two examples: algebraic may testing and Rabin's solution to the choice coordination problem.
 \end{abstract}

\section{Introduction}
Kleene algebra generalises the language of regular expressions and, as a basis for reasoning about programs and computing systems, it has been used in applications ranging from compiler optimisation, program refinement, combinatorial optimisation and algorithm design~\cite{Con71,Koz94,Koz00a,Koz00b,Mci06}. A number of variants of the original axiom system and language of Kleene algebra have extended its range of applicability to include probability \cite{Mci05} with the most recent being the introduction of a concurrency operator \cite{Hoa09}. Main benefits of the algebraic approach are that it captures some essential aspects of computing systems in a simple and concise way and that the calculational style of reasoning it supports is very suitable for automated theorem proving. 

In this paper we continue this line of work and propose \emph{weak concurrent Kleene algebra}, which extends the abstract probabilistic Kleene algebra~\cite{Mci05} with the concurrency operator of concurrent Kleene algebra~\cite{Hoa09} and thus supports reasoning about concurrency in a context of probabilistic effects. This extension calls for a careful evaluation of the axiom system so that it accurately accounts for the interactions of probabilistic choice, nondeterministic choice and the treatment of concurrency.  For example probabilistic Kleene algebra accounts for the presence of probability in the \emph{failure} of the original distributive law $x(y + z) = xy + xz$ which is also absent in most process algebras. That is because when the terms $x, y, z$ are interpreted as probabilistic programs, with $xy$ meaning ``first execute $x$ and then $y$" and $+$ interpreted as a nondeterministic choice, the expression on the left hand side exhibits a greater range of nondeterminism than the right in the case that $x$ includes probabilistic behaviours.  For example if $x$ is interpreted as a program which flips a bit with probability $1/2$ then the following nondeterministic choice in $y+z$ can always be resolved so that $y$ is executed \emph{if and only if} the bit was indeed flipped. This is not a behaviour amongst those described by $xy + xz$, where the nondeterminism is resolved before the bit is flipped and therefore its  resolution  is unavoidably independent of the flipping.
Instead, in contexts such as these, distributivity be replaced by a weaker law:
\begin{equation}\label{eq:subdistributivity}
\textrm{Sub-distributivity:} \qquad xy + xz ~\leq ~ x(y + z)~.
\end{equation}
Elsewhere~\cite{Rab11} we show that this weakening of the original axioms of Kleene algebra results in a complete system relative to  a model of nondeterministic automata modulo simulation equivalence.  

 The behaviour of the concurrency operator of concurrent Kleene algebra~\cite{Hoa09} is captured in particular by the \emph{Interchange law}:
$$
(x \| y) (u \| v) ~ \leq ~ (x u) \| (y v)
$$
which expresses that there is a lesser range of nondeterministic executions on the left where, for example, the execution of $u$ is constrained to follow a complete execution of $x$ run concurrently with $y$ but on the right it is not.

Our {\bf first contribution} is the construction of a concrete model of abstract probabilistic automata (where the probability is at the action level) over which to interpret  terms composed of traditional Kleene algebra  together with concurrent composition. In this interpretation, each term represents an automaton. For example in Equation (\ref{eq:subdistributivity}), $x,y$ and $z$ are automata and so is $xy + xz$. We show that the axiom system of concurrent Kleene algebra weakened to allow for the presence of probability is sound with respect to those probabilistic automata. Our use of probabilistic automata is similar to models where the resolution of probability and nondeterminism can be interleaved; concurrent composition of automata models CSP synchronisation~\cite{Hoa78} in that context. Finally we use a notion of rooted $\eta$-simulation to interpret the inequality $\leq$ used in  algebraic inequations.

Our {\bf second contribution} is to explore some applications of our axiomatisation of weak concurrent Kleene algebra, to explain our definition of   rooted $\eta$-simulation in terms of may testing~\cite{Nic83}, and to demonstrate the proof system on Rabin's distributed consensus protocol~\cite{Rab82}.

One of the outcomes of this study is to expose the tensions between the various aspects of system execution. 
Some of the original concurrent Kleene algebra axioms~\cite{Hoa09} required for the concurrency operator now fail to be satisfiable in the presence of probabilistic effects and synchronisation supported by the interchange law. For example, the term $1$ from Kleene algebra (interpreted as ``do nothing") can no longer be a neutral element for the concurrency operator $\|$ --- we only have the specific equality $1 \| 1 = 1$ and not the more general $1 \| x = x$.
In fact we chose to preserve the full interchange law in our choice of axioms because it captures so many notions of  concurrency already including exact parallel and synchronisation, suggesting that it is a property about general concurrent interactions.

A feature of our approach is to concentrate on broad algebraic structures in order to understand how various behaviours interact rather than to study precise quantitative behaviours. Thus we do not include an explicit probabilistic choice operator in the signature of the algebra --- probability occurs explicitly only  in the concrete model as a special kind of asynchronous probabilistic action combined with internal events (events that the environment cannot access). This allows the specification of complex concurrent behaviour to be simplified using applications of weak distributivity embodied by Equation (\ref{eq:subdistributivity}) and/or the interchange law as illustrated by our case study. Finally we note that the axiomatisation we give is entirely in terms  of first-order expressions and therefore is supported by first-order reasoning. Thus all of our algebraic proofs has been implemented within the Isabelle/HOL theorem proving environment. These proof can be found in a repository of formalised algebraic theorems.~\footnote{\url{http://staffwww.dcs.shef.ac.uk/people/G.Struth/isa/}}

In Section \ref{sec:axiomatisation} we explore the axiomatisation of the new algebra. It is essentially a mixture of probabilistic and concurrent Kleene algebras. Sections \ref{sec:concrete-model} and \ref{sec:soundness} are devoted to showing the consistency of our approach. A concrete model based on automata and $\eta$-simulation is constructed. In section \ref{sec:probabilistic-aut}, we compare our approach with probabilistic automata (automata that exhibit explicit probability) and probabilistic simulation. We conclude that, up to some constraint, the concrete model is a very special case of that more general model. In sections \ref{sec:algebraic-testing} and \ref{sec:rabin-protocol}, we present some applications, in particular an algebraic version of may testing is studied and variations of the specification of Rabin's protocol are explored.

In this paper $x,y,$ etc represent algebraic expressions or variables. Terms are denoted $s,t,$ etc. Letters $a,b,$ etc stand for actions and $\tau$ represents an internal action. An automaton associated with a term or an expression is usually denoted by the same letter. Other notation is introduced as we need it.

In this extended abstract we can only explain the main properties of weak concurrent Kleene algebra and sketch the construction of   the automaton model. Detailed constructions and proofs of all statements in this paper can be found in an appendix.

\section{Axiomatisation}\label{sec:axiomatisation}

A Kleene algebra is a structure that encodes algebraically the sequential behaviour of a system. It is generally presented in the form of an idempotent~\footnote{Idempotence refers to the operation $+$ i.e. $x + x = x$.} semiring structure $(K,+,\cdot,0,1)$ where $x\cdot y$ (sequential composition) is sometimes written using juxtaposition $xy$ in expressions. The term $0$ is the neutral element of $+$ and $1$ is the neutral element of $\cdot$. The semiring is then endowed with a unary Kleene star $*$ representing finite iteration to form a Kleene algebra. This operator is restricted by the following axioms:
\begin{eqnarray}
\textrm{Left unfold:}\qquad 1 + xx^* & = & x^*, \label{eq:unfold}\\
\textrm{Left induction:}\hspace{1.5mm}\qquad xy\leq y&\Rightarrow &x^*y\leq y,\label{hf:linduction}
\end{eqnarray}
where $x\leq y$ if and only if $x + y = y$. In the sequel our interpretations will be over a version of probabilistic automata. In particular we will interpret $\leq$ and $=$ as $\eta$-simulations.

Often, the dual of (\ref{eq:unfold}-\ref{hf:linduction}) i.e. $1 + x^*x = x^*$ and $yx\leq y\Rightarrow yx^*\leq y$ are also required. However, (\ref{eq:unfold}) and (\ref{hf:linduction}) are sufficient here and the dual laws follow from continuity of sequential composition for finite automata.

In a Kleene algebra, the semiring structure supports two distributivity laws:
\begin{eqnarray}
\textrm{Left distributivity:}\qquad\hspace{0.8mm} xy + xz & = & x(y + z), \label{eq:ldist}\\
\textrm{Right distributivity:}\qquad (x + y)z & = & xz + yz. \label{eq:rdist}
\end{eqnarray}  
Equation (\ref{eq:ldist}) however is not valid in the presence of probability. For example, compare the behaviour of probabilistic choice in the diagrams below. Here, $\flip_p$ denotes the process that flips a $p$-biased coin, which we can represent by a probabilistic automaton (details are given in Section \ref{sec:concrete-model}). 
\begin{figure}[h]\label{fig:dist}
$$
\xymatrix{
 	&\ar[dl]_{\flip_p}\ar[dr]^{\flip_p}&	&\hspace{0.5cm}&&\ar[d]_{\flip_p}&\\
\ar[d]_x	&&\ar[d]^y											&&&\ar[dl]_x\ar[dr]^y &\\	
	&&											&& &&
}
$$
\end{figure}
In the right diagram, the choice between $a$ and $b$ can be based on the outcome of the coin flip but such resolution is not possible in the left-hand diagram. We express the greater range of possible outcomes by the general inequation (\ref{eq:subdistributivity}), specifically here it becomes
\begin{equation}\label{eq:lsubdist}
 (\flip_p)y + (\flip_p)z \leq  (\flip_p)(y + z).~\footnote{We have abused notation in this example by using $\flip_p$ to represent both an action and an automaton which performs that action.}
\end{equation}

As mentioned above, the zero of a Kleene algebra satisfies:
\begin{eqnarray}
\textrm{Left annihilation:}\qquad 0x & = & 0, \label{eq:lzero}\\
\textrm{Right annihilation:}\qquad x0 & = & 0. \label{eq:rzero}
\end{eqnarray}

In our interpretation that includes concurrency, we assume that $0$ captures \emph{deadlock}. However, axiom (\ref{eq:rzero}) is no longer appropriate because we should be able to differentiate between the process doing an action and deadlocking from a process that is just deadlocked.

\begin{definition}
A weak probabilistic Kleene algebra is a structure $(K,+,\cdot,*,0,1)$ that satisfies the axioms of Kleene algebra except there is no left distributivity (it is replaced by (\ref{eq:subdistributivity})) and Equation (\ref{eq:rzero}) does not hold generally.
\end{definition}

A concurrency operator was added to Kleene algebra by Hoare et al~\cite{Hoa09}. Our concurrency operator $\|$ satisfies the following standard axioms:
\begin{eqnarray}
\textrm{Associativity:}\qquad x \| (y \| z) & = & (x \| y) \| z, \label{eq:par-assoc}\\
\textrm{Commutativity:}\hspace{1.3cm} x \| y & = & y \| x, \label{eq:par-comm}\\
\textrm{One-idempotence:}\hspace{1.35cm} 1 \| 1 & = & 1.\label{eq:par-1}
\end{eqnarray}

In~\cite{Hoa09}, $\|$ satisfies the identity $1\|x = x$ which we do not have here because in the concrete model, we will interpret $\|$ as the synchronisation operator found in CSP~\cite{Hoa78}. However, we still maintain the instance of that law in the special case $x = 1$ (see axiom~(\ref{eq:par-1})) where $1$ is interpreted as ``do nothing".

Next we have the axioms dealing the interaction of $\|, +$ and $\cdot$.
\begin{eqnarray}
\textrm{Monotonicity :}\qquad x \| y + x \| z & \leq & x \| (y + z) \label{eq:par-dist}\\
\textrm{Interchange-law:}\hspace{0.3cm} (x \| y) (u \| v) & \leq & (x u) \| (y v)\label{eq:exchange-law}
\end{eqnarray}

The interchange law is the most interesting axiom of concurrent Kleene algebra. In fact it allows the derivation of many properties involving $\|$. To illustrate this in the probabilistic context, consider a probabilistic vending machine $\mathtt{VM}$ which we describe as the expression
$$\mathtt{VM}\ =\ \coin\cdot\flip_p\cdot(\tau_h\cdot(\tea+1) + \tau_t\cdot(\coffee+1))$$
where $\coin,\tea,\coffee,\tau_h,\tau_t$ and $\flip_p$ are all represented by automata. 
That is the vending machine accepts a coin and then decides internally whether it will enable the button coffee or tea. The decision is determined by the action $\flip_p$~\footnote{i.e. the automaton that performs a $\flip_p$ action.} which (as explained later) enables either $\tau_h$ or $\tau_t$. The actions $\tau_t$ and $\tau_h$ are internal and the user cannot access them. Now, a user who wants to drink tea is specified as 
$$\mathtt{U}\ =\ \coin\cdot(\tea+1).$$
The system becomes $\mathtt{U}\|\mathtt{VM}$ where the concurrent operation is CSP like and synchronises on $\coin,\tea$ and $\coffee$. The interchange law  together with  the other axioms and some system assumptions imply the following inequation:
\begin{equation}\label{eq:vm}
\mathtt{U}\|\mathtt{VM}\ \geq\  \coin\cdot\flip_p\cdot(\tau_h\cdot(\tea+1) + \tau_t)
\end{equation}
which is proved automatically in our repository. In other words, the user will only be satisfied with \textit{probability at least} $p$ since the right-hand side equation says that the tea action can only be enabled provided that $\tau_h$ is enabled, and in turn that is determined by the result of the $\flip_p$ action.


Now we are ready to define our algebra.

\begin{definition}
A weak concurrent Kleene algebra is a weak probabilistic Kleene algebra $(K,+,\cdot,*,0,1)$ with a concurrency operator $\|$ satisfying (\ref{eq:par-assoc}-\ref{eq:exchange-law})
\end{definition}

We assume the operators precedence $*<\cdot<\|<+$.



%
%
%
%

\begin{proposition}\label{pro:elementary-consequences}
Let $s,t$ be terms, the following equations holds in weak concurrent Kleene algebra.
\begin{enumerate}
\item All the operators are monotonic.
\item $(s^*\|t^*)^* = s^*\|t^*$.\label{eq:star-idem}
\item $(s\|t)^*\leq s^*\| t^*$.\label{eq:subdist}
\item $(s + t)^* = (s^*t^*)^*$.
\end{enumerate}
\end{proposition}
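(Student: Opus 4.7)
The plan is to tackle the four parts in order, since each one builds on its predecessors.

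For part (1), I would handle monotonicity operator by operator. Monotonicity of $+$ is immediate from the definition $x\leq y \iff x+y=y$ together with the idempotent commutative monoid structure. For $\cdot$, right monotonicity follows from right distributivity (\ref{eq:rdist}) and left monotonicity from sub-distributivity (\ref{eq:subdistributivity}); monotonicity of $\|$ is obtained the same way, using (\ref{eq:par-dist}) and commutativity (\ref{eq:par-comm}). For $*$, given $x\leq y$, I would argue $xy^*\leq yy^*\leq y^*$ using left monotonicity of $\cdot$ and left unfold (\ref{eq:unfold}); apply left induction (\ref{hf:linduction}) to get $x^*y^*\leq y^*$; and finish with $x^* = x^*\cdot 1\leq x^*y^*\leq y^*$ using $1\leq y^*$, again a consequence of (\ref{eq:unfold}).

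For part (2), I would write $z = s^*\|t^*$ and reduce $z^* = z$ to two inclusions. The direction $z\leq z^*$ is standard ($z = z\cdot 1\leq zz^*\leq z^*$ using $1\leq z^*$ and unfold). The substantive direction uses the interchange law (\ref{eq:exchange-law}) to get $zz = (s^*\|t^*)(s^*\|t^*)\leq (s^*s^*)\|(t^*t^*)\leq s^*\|t^* = z$, where $s^*s^*\leq s^*$ follows from left induction applied to $ss^*\leq s^*$ (and symmetrically for $t$), and the outer step is monotonicity of $\|$. Combined with $1 = 1\|1\leq s^*\|t^* = z$ from (\ref{eq:par-1}) and monotonicity, left induction on $zz\leq z$ gives $z^*z\leq z$, and right monotonicity lifts this to $z^* = z^*\cdot 1\leq z^*z\leq z$. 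Part (3) then drops out immediately: $s\|t\leq s^*\|t^*$ by monotonicity of $\|$, so monotonicity of $*$ together with (2) yields $(s\|t)^*\leq (s^*\|t^*)^* = s^*\|t^*$.

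For part (4), the forward direction $(s+t)^*\leq (s^*t^*)^*$ follows from $s+t\leq s^*t^*$ (argue $s = s\cdot 1\leq s\cdot t^*\leq s^*t^*$ and symmetrically for $t$, using $1\leq s^*,t^*$) together with monotonicity of $*$. The reverse direction is the more substantial step: unfolding $(s+t)^*$ and applying right distributivity (\ref{eq:rdist}) yields $s(s+t)^*\leq (s+t)^*$ and $t(s+t)^*\leq (s+t)^*$; two applications of left induction together with left monotonicity of $\cdot$ string these into $s^*t^*(s+t)^*\leq s^*(s+t)^*\leq (s+t)^*$; a final application of left induction with $1\leq (s+t)^*$ closes the argument to $(s^*t^*)^*\leq (s+t)^*$. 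The main obstacle I anticipate is bookkeeping rather than depth: since left distributivity is absent and $\|$ only sub-distributes over $+$, I must be careful never to rewrite an inequality in a direction that would silently invoke a missing axiom. All the genuine content lives in the interchange-law step of (2) and the iterated-induction step of (4); the remainder is Kleene-algebra boilerplate adapted to the weaker axiom base.
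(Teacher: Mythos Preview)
Your proposal is correct. The paper itself does not include an in-text proof of this proposition; it defers the algebraic derivations to an Isabelle/HOL formalisation in the authors' repository, so there is no detailed argument to compare against beyond noting that your derivation is precisely the kind of routine axiom-chasing the authors intend. Your use of the interchange law for part~(2), monotonicity of $*$ to reduce~(3) to~(2), and iterated left induction via right distributivity for the hard direction of~(4) are all standard and stay strictly within the weak axiom base (no left distributivity, only sub-distributivity for $\cdot$ and $\|$). One trivial wording slip: in the last line of~(2) you write ``right monotonicity'' for the step $z^* = z^*\cdot 1 \leq z^* z$, but this is monotonicity of $\cdot$ in its \emph{right} argument (i.e.\ left multiplication preserves order), which you earlier derived from sub-distributivity; the mathematics is fine.
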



\section{Concrete Model}\label{sec:concrete-model}
\subsection{Semantic Space}\label{subsec:semantic-space}

We use nondeterministic automata to construct a concrete model. An automaton is denoted by a tuple 
\begin{displaymath}
(P,\lra,i,F)
\end{displaymath}
where $P$ is a set of states. The set $\lra\subseteq P\times\Sigma\times P$ is a transition relation and we write $x\trans{a } y$ when there is a transition, labelled by $a$, from state $x$ to state $y$. The alphabet $\Sigma$ is left implicit and considered to be fixed for every automaton. The state $i\in P$ is the initial state and $F\subseteq P$ is the set of final states of the automaton. In the sequel, we will denote an automaton $(P,\lra,i,F)$ by its set of states $P$ when no confusion is possible.

The actions in the alphabet $\Sigma$ are categorised into three kinds:
\begin{itemize}
\item \textit{internal}: actions that will be ``ignored" by the simulation relation (as in $\tau_h$ and $\tau_t$). Internal actions are never synchronised by $\|$.
\item \textit{external}: actions that \emph{can} be synchronised. Probabilistic actions are external (as in $\flip_p$) but they are \emph{never} synchronised.
\item \textit{synchronised}: external actions that will be synchronised when applying $\|$ (as in $\coin,\tea$ and $\coffee$). These actions are determined by a set of external actions $A$. More specifically, $\|$ refers to $\pr{A} $ which we assume is fixed and given beforehand.
\end{itemize}

The special case of probabilistic choice is modelled by combining probabilistic and internal actions. That is a process that does $a$ with probability $p$ and does $b$ with probability $1-p$ is interpreted as the following automaton 
\begin{figure}[h]
$$
\xymatrix{
& \ar[d]^{\flip_p} & \\
& \ar[dl]_{\tau_h}\ar[dr]^{\tau_l} &\\
\ar[d]_{a}&&\ar[d]^{b}\\
& &
}$$
\end{figure}
where $\flip_p\in\Sigma$ represents the action of flipping a $p$-biased coin which produces head with probability $p$ and tail with probability $1-p$. The internal actions $\tau_t$ and $\tau_h$ are enabled according to the result of $\flip_p$. Hence only one of $\tau_h$ and $\tau_t$ will be enabled just after the coin flip. Since $\tau_t$ and $\tau_h$ are internal actions, the choice is internal and based upon the outcome of $\flip_p$. The important facts here are that the choice after $\flip_p$ is internal so could be based on the probabilistic outcome of $\flip_p$ and that the environment cannot interfere with that choice. These two behavioural characteristics are what we consider to be the most general features of probability in a concurrent setting and they are those which we axiomatise and record in our concrete model.


Next, we impose some conditions on the automata to ensure soundness. 

\begin{itemize}
\item[-]\label{hc:reachable} reachability: every state of the automaton is reachable by following a finite path from the initial state. 

\item[-]\label{hc:initial} initiality: there is no transition that leads to the initial state. This means that $a^*$ corresponds to the automata associated to $1 + aa^*$ rather than a self loop labeled by $a\in\Sigma$.
\end{itemize}

We denote by $\aut$ the set of automata satisfying these two conditions. The next step is to define the operators that act on $\aut$. We use the standard inductive construction found in~\cite{Coh09,Gla90,Rab11} and the diagrams illustrating the constructions are given in the appendix.
\begin{itemize}
\item[]\textbf{Deadlock: $0$}\\ This is the automaton that has only one state, namely the initial state, and no transition at all. It is the tuple $(\{i\}, \emptyset,i,\emptyset)$.
\item[]\textbf{Skip: $1$}\\ This is the automaton that has only one state $i$ which is both initial and final. This automaton has no transition i.e. is denoted by $(\{i\},\emptyset,i,\{i\})$.
\item[]\textbf{Single action:} \\
The automata associated with $a$ is $i\trans{a} \circ$ where $i$ is the initial state and $\circ$ is a final state. It is the tuple $(\{i,\circ\},\{i\trans{a} \circ\}, i, \{\circ\})$.
\item[]\textbf{Addition: $P+Q$}\\ This automaton is obtained using the standard construction of identifying the initial states of $P$ and $Q$. (This is possible due to the initiality property.) 
\item[]\textbf{Multiplication: $PQ$ (or $P\cdot Q$)}\\ This automaton is constructed in the standard way of identifying copies of the initial state of $Q$ with final states of $P$.
%
\item[]\textbf{Concurrency: $P\pr{A} Q$} \\ This automaton is constructed as in CSP~\cite{Hoa78}. It is a sub-automaton of the Cartesian product of $P$ and $Q$. The initial state is $(i_P,i_Q)$ and final states are reachable elements of $F_P\times F_Q$. Notice that the set $A$ never contains probabilistic actions. Further explanation about $\pr{A} $ is given below.
%

\item[]\textbf{Kleene star: $P^*$} \\This automaton is the result of repeating $P$ allowing a successful termination after each  (possibly empty) full execution of $P$. The initial state of $P^*$ is final and copies of the initial state of $P$ are identified with the final states of $P$.

\end{itemize}

All automata begin with an initial state and end in some final or deadlock state. Our main use of final states is in the construction of sequential composition and Kleene star.

The concurrency operator $\pr{A} $ synchronises transitions labeled by an action in $A$ and interleaves the others (including internal transitions). As in CSP, a synchronised transition waits for a corresponding synchronisation action from the other argument of $\pr{A} $. This is another reason we do not have $1\pr{\{a\}} P = P$ because if $P = i_P\trans{a} \circ$ and $i_P$ is not a final state, then $$1\pr{\{a\}} P = (\{(i,i_P)\}, \emptyset, (i,i_P),\emptyset) = 0.$$


\begin{proposition}\label{pro:hc-welldef}
These operations of weak concurrent Kleene algebra are well defined on $\aut$ that is if $P,Q\in\aut$ then $P+Q, PQ, P\pr{A} Q$ and $P^*$ are elements of $\aut$.
\end{proposition}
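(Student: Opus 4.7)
The plan is to verify, for each of the four operations, that both defining properties of $\aut$—namely reachability from the initial state and initiality (no incoming transitions to the initial state)—are preserved. Since the construction of each composite automaton is explicit and finite, this reduces to a structural case analysis where each case follows by tracing paths in, or arguing about incoming transitions to, the new initial state.

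For reachability, I would handle each operator separately. For $P + Q$, the new initial state is the identified pair $(i_P, i_Q)$, so reachability in each summand lifts directly; any state belongs to the copy of either $P$ or $Q$ and is reachable in that copy by hypothesis. For $PQ$, states in the copy of $P$ are reachable in $P$ from $i_P = i_{PQ}$, while states in the copy of $Q$ are reached by first going along a path in $P$ to a final state of $P$ (which has been identified with $i_Q$) and then following a path inherited from $Q$. For $P \pr{A} Q$, the sub-automaton construction takes only reachable elements of $F_P \times F_Q$, so reachability is immediate by definition. For $P^*$, every state lies in some copy of $P$ attached at a final state of the previous copy (or at the new initial state, which is itself final), giving a finite path from $i_{P^*}$ to any state.

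For initiality, in each case I would suppose a transition $x \trans{a} i_R$ into the new initial state $i_R$ of the composite automaton $R$ and derive a contradiction by projecting this transition back to a transition into $i_P$ or $i_Q$. For $P + Q$, such a transition would have to originate in one of the two summands and terminate at the identified initial state, contradicting initiality of $P$ or $Q$. For $PQ$, the new initial state is $i_P$ and any transition into it either lives in $P$ (contradicting initiality of $P$) or lives in the copy of $Q$ and is actually a transition into $i_Q$—but the identification was applied to the \emph{final} states of $P$, not to $i_P$, so this case cannot arise. For $P^*$, the new initial state is freshly created with no incoming arrows by construction.

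The main subtlety lies in the concurrency case $P \pr{A} Q$, because the construction involves a Cartesian product together with a sub-automaton restriction and two kinds of transitions (synchronised and interleaved). I would therefore treat it last and explicitly. A transition $(x,y) \trans{a} (i_P, i_Q)$ in the product must fall into one of three shapes: $a \in A$ with $x \trans{a} i_P$ in $P$ and $y \trans{a} i_Q$ in $Q$; $a \notin A$ with $x \trans{a} i_P$ in $P$ and $y = i_Q$; or $a \notin A$ with $x = i_P$ and $y \trans{a} i_Q$ in $Q$. In every shape at least one of the components yields a transition into $i_P$ or $i_Q$, contradicting initiality in the corresponding factor. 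Reachability is automatic because the construction is defined to restrict to states reachable from $(i_P, i_Q)$. Once this case is dispatched, the proposition follows.
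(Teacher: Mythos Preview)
Your proposal is correct and follows essentially the same approach as the paper: a case analysis over the four operators, verifying reachability and initiality for each. The paper's appendix proof is considerably terser (it simply appeals to the construction diagrams for $P+Q$, $P\pr{A} Q$, and $P^*$, and only spells out $PQ$), so your version is in fact more complete.

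One small imprecision: in the initiality argument for $PQ$, you dismiss the possibility of a $Q$-transition landing on $i_{PQ}$ by saying ``the identification was applied to the final states of $P$, not to $i_P$''. But $i_P$ may itself be a final state of $P$ (nothing in the definition of $\aut$ forbids this; indeed the automaton $1$ has $i$ final), in which case $i_P$ \emph{is} identified with a copy of $i_Q$. The correct reason the case cannot arise is initiality of $Q$: there are no transitions into $i_Q$ in $Q$ to begin with, so none are inherited at any of the identified final states, including $i_P$ if it happens to be final.
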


The proof consists of checking that $P+Q, PQ, P\|Q$ and $P^*$ satisfy the reachability and initiality conditions whenever $P$ and $Q$ satisfy the same conditions. (See Proposition \ref{apro:stability} in the appendix).

In the sequel, whenever we use an unframed concurrency operator $\|$, we mean that the frame $A$ has been given and remains fixed.

\subsection{Equivalence}\label{subsec:notion-of-equality}

The previous subsection has given us the objects and operators needed to construct our concrete model. Next we turn to the interpretation of equality for our concrete interpretation.

Following the works found in~\cite{Coh09,Rab11,Mil71}, we again use a simulation-like relation to define valid equations in the concrete model. More precisely, due to the presence of internal actions, we will use an \textit{$\eta$-simulation} as the basis for our equivalence. 

Before we give the definition of simulation, we need the following notation. Given the state $x$ and $y$, we write $x\Rightarrow y$ if there exists a path, possibly empty, from $x$ to $y$ such that it is labelled by internal actions only. This notation is also used in~\cite{Gla90} with the same meaning.
\begin{definition}\label{df:sim}
Let $P,Q$ be automata, a relation $S\subseteq P\times Q$ (or $S:P\rightarrow Q)$ is called \textbf{$\eta$-simulation} if 
\begin{itemize}
\item[--] $(i_P,i_Q)\in S$,
\item[--] if $(x,y)\in S$ and $x\trans{a} x'$ then 
\begin{itemize}
\item[a)] if $a$ is internal then there exits $y'$ such that $y\Rightarrow y'$ and $(x',y')\in S$,
\item[b)] if $a$ is external then there exists $y_1$ and $y'$ in $Q$ such that $y\Rightarrow y_1\trans{a} y'$ and $(x,y_1)\in S$ and $(x',y')\in S$.
\end{itemize} 
\item[--] if $(x,y)\in S$ and $x\in F_P$ then $y\in F_Q$.
\end{itemize}
A simulation $S$ is \textbf{rooted} if $(i_P,y)\in S$ implies $y = i_Q$. If there is a rooted simulation from $P$ to $Q$ then we say that $P$ is simulated by $Q$ and we write $P\leq Q$. Two processes $P$ and $Q$ are \textbf{simulation equivalent} if $P\leq Q$ and $Q\leq P$, and we write $P\equiv Q$. In the sequel, rooted any $\eta$-simulation will be referred simply as a simulation.
\end{definition}


Relations satisfying Definition \ref{df:sim} are also $\eta$-simulation in the sense of~\cite{Gla90} where property (a) is replaced by:
\begin{equation}\label{pr:property-a}
\textrm{if } a \textrm{ is internal then } (x',y)\in S.
\end{equation}
The identity relation (drawn as dotted arrow) in the following diagram is a simulation relation satisfying Definition \ref{df:sim}, but it is not a simulation in the sense of~\cite{Gla90}.
\begin{figure}[h]
$$\xymatrix{
 \ar[d]^\tau\ars[rr]&&\ar[d]^\tau\\
 \circ\ars[rr]&&\circ
}$$
\end{figure}
We need the identity relation to be a simulation here because in our proof of soundness, more complex simulations are constructed from identity relations.

\begin{proposition}\label{pro:eta-sim-welldef}
The following statements hold.
\begin{enumerate}
\item The relational composition of two rooted $\eta$-simulations is again a rooted $\eta$-simulation. That is, if $S,T$ are rooted $\eta$-simulations then $S\circ T$ is also a rooted $\eta$-simulation, where $\circ$ denotes relational composition.
\item The simulation relation $\leq$ is a preorder on $\aut$.
\end{enumerate}
\end{proposition}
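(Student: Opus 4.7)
The plan is to prove the two parts in order, with part 2 essentially following from part 1 together with a quick verification that identity relations are rooted $\eta$-simulations.

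For part 1, let $S : P \to Q$ and $T : Q \to R$ be rooted $\eta$-simulations. I would verify each clause of Definition \ref{df:sim} for $S \circ T$. The initial-state clause is immediate: rootedness of $S$ gives $(i_P, i_Q) \in S$, rootedness of $T$ gives $(i_Q, i_R) \in T$, so $(i_P, i_R) \in S \circ T$. Rootedness of $S \circ T$ follows similarly: any $(i_P, z) \in S \circ T$ factors through some $y$ with $(i_P, y) \in S$, forcing $y = i_Q$ by rootedness of $S$, and then $(i_Q, z) \in T$ forces $z = i_R$. The final-state clause is a direct two-step unfolding using the final-state clauses of $S$ and $T$.

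The real content is the transfer clause. Given $(x, z) \in S \circ T$ via some mediating $y \in Q$, and $x \trans{a} x'$, I first need a lemma handling internal paths: if $(y, z) \in T$ and $y \Rightarrow y'$, then there exists $z'$ with $z \Rightarrow z'$ and $(y', z') \in T$. This lemma is proved by induction on the length of $y \Rightarrow y'$, each step applying clause (a) of the $\eta$-simulation $T$ to a single internal transition and concatenating the resulting internal paths in $R$. With this in hand, the internal case for $S \circ T$ is straightforward: apply clause (a) of $S$ to get $y \Rightarrow y'$ with $(x', y') \in S$, then apply the lemma to lift this to $z \Rightarrow z'$ with $(y', z') \in T$, giving $(x', z') \in S \circ T$. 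The external case combines the lemma with clause (b): $S$ produces $y \Rightarrow y_1 \trans{a} y'$ with $(x, y_1), (x', y') \in S$; the lemma applied to $y \Rightarrow y_1$ yields $z \Rightarrow z_1$ with $(y_1, z_1) \in T$; then clause (b) of $T$ applied to $y_1 \trans{a} y'$ gives $z_1 \Rightarrow z_1' \trans{a} z'$ with $(y_1, z_1'), (y', z') \in T$. Concatenating the two internal paths yields $z \Rightarrow z_1' \trans{a} z'$ with $(x, z_1')$ and $(x', z')$ in $S \circ T$, as required.

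For part 2, transitivity of $\leq$ is immediate from part 1. For reflexivity I would show that the identity relation $\mathrm{Id}_P$ on any $P \in \aut$ is a rooted $\eta$-simulation $P \to P$. The rooted condition and initial/final clauses are trivial. For the transfer clause with $(x,x) \in \mathrm{Id}_P$ and $x \trans{a} x'$: if $a$ is internal then $x \trans{a} x'$ itself witnesses $x \Rightarrow x'$, so clause (a) holds with $y' = x'$; if $a$ is external, take the empty internal path $x \Rightarrow x$ and use $x \trans{a} x'$ directly, giving $y_1 = x$ and $y' = x'$ as required by clause (b). This is precisely the observation highlighted by the small diagram after Definition \ref{df:sim}.

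The main obstacle is the internal-path lifting lemma that underwrites the transfer-clause argument; once that induction is cleanly set up the rest of part 1 is mechanical, and part 2 is then essentially free.
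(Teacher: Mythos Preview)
Your proposal is correct and follows essentially the same approach as the paper's proof: both verify the clauses of Definition~\ref{df:sim} for $S\circ T$ via a mediating state, use the internal-path lifting (which you isolate as an explicit lemma, while the paper invokes it inline as ``since $y\Rightarrow y'$ consists of a sequence of finite internal transitions''), and derive part~2 from reflexivity of the identity relation plus part~1. The only cosmetic slip is that $(i_P,i_Q)\in S$ comes from the first clause of the simulation definition rather than from rootedness, but you use rootedness correctly in the subsequent argument.
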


Proposition \ref{pro:eta-sim-welldef} is proven in Proposition \ref{apro:sim-equivalence} of the appendix.

Therefore, $\equiv$ as determined by Definition \ref{df:sim} is an equivalence. In fact, we prove in the following proposition that it is a congruence with respect to $+$. 

\begin{proposition}\label{pro:sim-cong}
The equivalence relation $\equiv$ is a congruence with respect to $+$ and $P\leq Q$ iff $P + Q\equiv Q$.
\end{proposition}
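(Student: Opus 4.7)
The plan is to reduce both statements to the observation that the constructions of $P + Q$ come equipped with two canonical rooted $\eta$-simulations: the left and right inclusions $\iota_P : P \to P + Q$ and $\iota_Q : Q \to P + Q$. Each is obtained by sending the initial state to the shared initial state $i_{P+Q}$ (the identification of $i_P$ and $i_Q$) and acting as the identity elsewhere. The initiality condition ensures that no transition enters $i_P$ or $i_Q$, so transitions out of $i_P$ in $P$ are preserved as transitions out of $i_{P+Q}$, and the inclusion is rooted by construction. All three $\eta$-simulation clauses are then immediate.

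With these inclusions in hand, the backward direction of the characterisation is essentially free: if $P + Q \equiv Q$, then $P \leq P + Q \leq Q$ using $\iota_P$ and the assumed simulation, and transitivity of $\leq$ (Proposition~\ref{pro:eta-sim-welldef}) gives $P \leq Q$.

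For the forward direction, assume a rooted $\eta$-simulation $S : P \to Q$. The inclusion $\iota_Q$ already gives $Q \leq P + Q$, so we only need $P + Q \leq Q$. I would paste $S$ with the identity on $Q$: define $T \subseteq (P+Q) \times Q$ by $T = \{(i_{P+Q}, i_Q)\} \cup \{(x, y) : x \in P \setminus \{i_P\}, (x,y) \in S\} \cup \{(x, x) : x \in Q\}$. Rootedness holds because the only image of $i_{P+Q}$ under $T$ is $i_Q$ (using rootedness of $S$ for the $P$-side). To check the simulation clauses, the only nontrivial case is a transition from $i_{P+Q}$: if it originates from $P$, match it by applying clause (a) or (b) of $S$ starting from $(i_P,i_Q)$; if it originates from $Q$, match it by the identity on $Q$. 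The final-state clause and internal-weakening $\Rightarrow$ transfer directly.

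The congruence statement follows by the same pasting idea applied twice. Given $S_P : P \to P'$ and $S_Q : Q \to Q'$, I would define $T : P+Q \to P'+Q'$ by sending $i_{P+Q} \mapsto i_{P'+Q'}$ and otherwise relating $x \in P\setminus\{i_P\}$ to states given by $S_P$ (and symmetrically for $Q$), with the understanding that any occurrence of $i_{P'}$ in the image is interpreted as $i_{P'+Q'}$, and likewise for $i_{Q'}$. The expected obstacle is precisely this bookkeeping around the identified initial state: since rootedness of $S_P$ is only a one-way implication, a non-initial state of $P$ could be $S_P$-related to $i_{P'}$ via an internal weak step, and one must check that interpreting this target as $i_{P'+Q'}$ in $P'+Q'$ is still consistent with the $\eta$-simulation clauses (the weak step $i_{P'} \Rightarrow y'$ lifts verbatim to a weak step from $i_{P'+Q'}$ because $P' \hookrightarrow P'+Q'$ preserves transitions). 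Symmetric application yields $P'+Q' \leq P+Q$, and we conclude $P+Q \equiv P'+Q'$.
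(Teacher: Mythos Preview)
Your proposal is correct and close to the paper's argument, with one structural difference worth noting. The paper first proves congruence by taking the plain union $S\cup S'$ of the two rooted $\eta$-simulations and then handles the only delicate case --- a pair $(x,y)\in S'$ where the outgoing transition lies in $P$ --- by observing that this forces $x=i_P=i_{P'}$ and, by rootedness of $S'$, $y=i_{Q'}=i_Q$, reducing to the $S$-case. Your explicit bookkeeping with ``interpret $i_{P'}$ as $i_{P'+Q'}$'' is just a more verbose rendering of the same union; the paper's formulation is tidier but equivalent.

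For the characterisation, the paper derives the forward direction \emph{from} congruence: $P\leq Q$ gives $P+Q\leq Q+Q$ by the congruence just established, and then idempotence of $+$ yields $P+Q\leq Q$; combined with the inclusion $id_Q:Q\to P+Q$ this gives $P+Q\equiv Q$. You instead build the simulation $T:P+Q\to Q$ directly by pasting $S$ with $id_Q$. Both work; the paper's route is shorter once congruence is in hand and avoids repeating the pasting argument, while yours keeps the two claims logically independent. The backward direction is identical in both.
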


The proof adapts and extends the one found in \cite{Gla90} and the specialised version for our case is Proposition \ref{apro:sim-congruence} in the appendix.

It is well documented that $\eta$-simulation is not a congruence without the rootedness condition~\cite{Gla90}. A typical example is given by the expressions $\tau a + \tau b$ and $\tau(a +b)$. The automata associated to these expressions are equivalent under non-rooted $\eta$-simulation.


The manipulation of probabilistic actions is also an important facet of our model. We assume that probabilistic actions are not synchronised and in that respect they are similar to internal actions. However probabilistic actions cannot be treated as internal as the following examples illustrates. Consider the action $\flip_{1/2}$ which flips a fair coin. If $\flip$ is an internal action then the inequality 
$$(\flip_{1/2})(\tau a + \tau b)\leq (\flip_{1/2})\tau a + (\flip_{1/2})\tau b$$
would be valid when interpreted in the concrete model. In other words, we would have the following simulation:
\begin{figure}[h]
$$
\xymatrix{
&\ar[d]_{\flip_{1/2}}\ars[rrrr]&&& 				&\ar[dl]_{\flip_{1/2}}\ar[dr]^{\flip_{1/2}}&\\
&\ar[ld]_{\tau}\ars[rrurr]\ar[dr]^{\tau}&&&		\ar[d]_{\tau}&&\ar[d]^{\tau}\\
\ar[d]_a\ars@/_/[rrrr]\ars@/_/[uurrrrr]&&\ar[d]^b\ars[uurrr]\ars@/_/[rrrr]&&	\ar[d]^a&&\ar[d]_b\\
\ars@/_/[rrrr]&&\ars@/_/[rrrr]&&										&&
}
$$
\end{figure}

But this relationship (which implies distributivity of $\flip_p$ through $+$) does not respect the desired behaviour of probability which, as we explained earlier, satisfies only a weaker form of distributivity.
Whence, we assume that probabilistic actions such as $\flip_{1/2}$ are among the external actions which will never be synchronised.

\section{Soundness}\label{sec:soundness}

In this section, we prove that the set $\aut$ endowed with the operators defined in Subsection \ref{subsec:semantic-space} modulo rooted $\eta$-simulation equivalence (Subsection \ref{subsec:notion-of-equality}) forms a weak concurrent Kleene algebra. 

The first part is to prove that $\aut$ is a weak probabilistic Kleene algebra.

\begin{proposition}\label{pro:pka-soundness}
$(\aut,+,\cdot,*,0,1)$ is a weak probabilistic Kleene algebra.
\end{proposition}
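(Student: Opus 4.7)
The plan is to verify each defining axiom of a weak probabilistic Kleene algebra by exhibiting the required rooted $\eta$-simulations on the corresponding automata in $\aut$. By Proposition \ref{pro:hc-welldef} the six operators preserve $\aut$, and by Proposition \ref{pro:eta-sim-welldef} the simulation relation is a preorder, so the simulations I construct descend to the required (in)equalities modulo $\equiv$. Throughout, the initiality and reachability conditions on $\aut$ do the heavy lifting of preventing spurious identifications when final states are glued to fresh initial-state copies.

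I would proceed in four groups. First, the idempotent commutative monoid axioms for $(\aut, +, 0)$ follow from natural state-level bijections: the pairs $(P+Q)+R$ and $P+(Q+R)$, $P+Q$ and $Q+P$, and $P+P$ and $P$ are each related by identity-like bijections that become $\eta$-simulations in both directions after folding the shared initial state; that $0$ is the unit for $+$ is immediate from its one-state construction. Second, the multiplicative monoid axioms $P(QR) \equiv (PQ)R$ and $1 \cdot P \equiv P \equiv P \cdot 1$ follow similarly: the construction of $\cdot$ identifies final states of the left factor with copies of the initial state of the right, which is visibly associative and has $1$ as a two-sided unit.

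Third, I would handle the semiring interaction axioms. Right distributivity $(P+Q)R \equiv PR + QR$ is a two-sided bisimulation, since both constructions yield the same underlying transition system up to the relabelling that sends the shared initial state on either side appropriately. Left annihilation $0 \cdot P \equiv 0$ is immediate because $0$ has no final states, so concatenation contributes nothing from $P$. Sub-distributivity $PQ + PR \leq P(Q+R)$ is witnessed by the map that folds the two copies of $P$ on the left onto the single copy of $P$ on the right, and then sends the attached copies of $Q$ and $R$ to the corresponding copies sitting inside $Q+R$; this is a rooted $\eta$-simulation, while the converse map fails because in $P(Q+R)$ the nondeterminism between $Q$ and $R$ is resolved after $P$ terminates (and can thus depend on internal outcomes of $P$), whereas in $PQ + PR$ it is resolved up-front. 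Notice that $P \cdot 0 \equiv 0$ is deliberately \emph{not} provable here, since $P\cdot 0$ retains the transitions of $P$ and only lacks final states.

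Fourth, and where I expect the real work to lie, are the two Kleene star axioms. Left unfold $1 + PP^* \equiv P^*$ is a straightforward bisimulation: by construction the initial state of $P^*$ is final (contributing the $1$ summand) and its outgoing $P$-transitions go to copies of $P$ whose final states are reidentified with the initial state, which is exactly the $PP^*$ summand. The genuine obstacle is left induction: given a rooted $\eta$-simulation $S \colon PQ \to Q$, I would construct by induction on the number of $P$-unfoldings a rooted $\eta$-simulation $S^\star \colon P^*Q \to Q$, taking as base case the $1$ summand (which maps the initial state of $P^*Q$ to the initial state of $Q$ and then identity-simulates the copy of $Q$ attached there) and extending through one further copy of $P$ at each step by pasting in $S$. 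The delicate verifications are that internal-action moves inside $P$ are absorbed correctly by $Q$ through clauses (a)--(b) of Definition \ref{df:sim}, that the rootedness condition survives each extension, and that every reachable state of $P^*Q$ is eventually covered thanks to the reachability condition; composition of rooted $\eta$-simulations (Proposition \ref{pro:eta-sim-welldef}(1)) is exactly what makes the inductive pasting well-typed.
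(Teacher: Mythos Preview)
Your proposal follows the paper's approach closely: each axiom is discharged by exhibiting an explicit rooted $\eta$-simulation, and your constructions for the $(+,0)$-monoid, the $(\cdot,1)$-monoid, right distributivity, left annihilation, sub-distributivity, and left unfold all match the paper's (which in a couple of places simply defers the details to~\cite{Rab11}).

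The one place your sketch diverges is left induction. The paper's argument (via~\cite{Rab11}) pivots on the fact that rooted $\eta$-simulations are \emph{closed under arbitrary union}: one writes down $S^\star \subseteq P^{*}Q \times Q$ directly as a union and invokes union-stability to conclude it is a simulation. Your phrasing instead foregrounds \emph{composition}, and this is ambiguous in a way that matters. If ``induction on the number of $P$-unfoldings'' means proving $(1+P)^{n}Q \leq Q$ for each $n$ by composing $S$ with right-monotonicity of $\cdot$, you still need a limiting step to reach $P^{*}Q \leq Q$; that step requires continuity of sequential composition, which the paper establishes only separately (Proposition~\ref{pro:mult-cont}) and only for \emph{finite} automata, so it is not available at this point in full generality. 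If instead you mean building a single relation $S^\star$ on the fixed automaton $P^{*}Q$ layer by layer---bearing in mind that $P^{*}$ reuses the states of $P$ rather than containing infinitely many disjoint copies, so the ``layers'' are paths, not new states---then the operative closure property is union, not composition, and you are back on the paper's route. Either reading can be completed, but you should say which one you intend and invoke the correct closure fact; Proposition~\ref{pro:eta-sim-welldef}(1) alone does not do the job.
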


The proof consists of detailed verifications of the axioms for weak probabilistic Kleene algebra (see Proposition \ref{apro:weak-pka} in the appendix).

The second part consists of proving that $\|$ satisfies the equations (\ref{eq:par-assoc}-\ref{eq:exchange-law}). Associativity depends heavily on the fact that both concurrent compositions involved in $x\|y\|z$ have the same frame set. For instance, let $\Sigma = \{a,b,c\}$. The identities
$$(a\pr{\{a\}} b ) \pr{\{c\}} a = ab0 + ba0$$
and
$$a\pr{\{a\}} (b  \pr{\{c\}} a) = ab + ba$$
are valid in the concrete model. Hence, the first process will always go into a deadlock state though the second one will always terminate successfully. Therefore, to have associativity, the concurrency operator must have a fixed frame.

\begin{proposition}\label{pro:soundness-par}
$(\aut, +,\cdot,\pr{A} ,1 )$ satisfies equations (\ref{eq:par-assoc}- \ref{eq:exchange-law}) modulo rooted $\eta$-simulation equivalence for any set of synchronisable actions $A\subseteq\Sigma$ (i.e. no probabilistic actions).
\end{proposition}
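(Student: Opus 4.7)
The plan is to verify each of (\ref{eq:par-assoc})--(\ref{eq:exchange-law}) by exhibiting an explicit rooted $\eta$-simulation between the automata on the two sides. For the equational axioms (commutativity, one-idempotence, associativity) I need simulations in both directions, while for the two inequations (monotonicity and interchange) only one direction is required. In every case an identity-like relation on state spaces will form the core of the simulation, which is admissible because, as discussed after Definition~\ref{df:sim}, the $\eta$-clause permits identity to witness silent steps.

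I would first dispose of the three easier axioms. Commutativity follows from the swap $(p,q)\mapsto(q,p)$, which is an automaton isomorphism between $P\pr{A} Q$ and $Q\pr{A} P$ preserving initial, final, synchronised and interleaved transitions. One-idempotence is immediate by inspection: $1\pr{A} 1$ has a single state $(i,i)$ that is both initial and final. For associativity, the rebracketing $(p,(q,r))\leftrightarrow((p,q),r)$ is a bijection between $P\pr{A}(Q\pr{A}R)$ and $(P\pr{A}Q)\pr{A}R$; since the frame $A$ is shared on both sides, a transition is synchronised across all three components exactly when its label lies in $A$ and all three components fire, independently of the bracketing---precisely the condition the counterexample preceding the statement shows to be necessary. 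Monotonicity $P\pr{A} Q + P\pr{A} R \leq P\pr{A}(Q+R)$ is witnessed by the evident embedding: the merged initial state on the left maps to $(i_P, i_{Q+R})$, and any subsequent state $(p,q)$ or $(p,r)$ maps to itself inside $P\pr{A}(Q+R)$ via the natural injections $Q, R \hookrightarrow Q+R$ away from the initial states.

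The main obstacle is the interchange law $(P\pr{A} Q)(U\pr{A} V) \leq (PU)\pr{A}(QV)$, which requires care with the state identifications imposed by sequential composition. Recall that the states of $(P\pr{A} Q)(U\pr{A} V)$ are the disjoint union of the non-final states of $P\pr{A} Q$ and the states of $U\pr{A} V$, with the final pairs $F_P\times F_Q$ collapsed to $(i_U,i_V)$; states of $(PU)\pr{A}(QV)$ are pairs whose components lie in the analogously quotiented $PU$ and $QV$, where in $PU$ the coordinate $i_U$ is the merged $F_P$-state (and similarly in $QV$). I would define the simulation $S$ by the identity-like map: a non-final $(p,q)$ on the left is sent to $(p,q)$ on the right with $p\in P\subseteq PU$ and $q\in Q\subseteq QV$, and a state $(u,v)$ of $U\pr{A} V$ on the left is sent to $(u,v)$ on the right. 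This is well-defined because the collapse on the left sends every final pair $(f_P,f_Q)$ to $(i_U,i_V)$, which on the right is exactly the pair of merged states $(i_U,i_V)$.

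What remains is a case analysis showing that $S$ satisfies the conditions of Definition~\ref{df:sim}: transitions within the $P\pr{A} Q$ part of the left lift immediately to transitions on the $P$- and $Q$-components of the right; transitions within the $U\pr{A} V$ part lift to the $U$- and $V$-components; and a seam-crossing transition $(p,q)\to(i_U,i_V)$ on the left is mirrored by the same action on the right because $(f_P,f_Q)=(i_U,i_V)$ there. The asymmetry that makes this only an inequality is that $(PU)\pr{A}(QV)$ can begin executing $U$ while $Q$ is not yet in a final state, a behaviour that the left-hand automaton forbids; because simulation is one-directional this extra flexibility on the right is harmless. Rootedness and preservation of final states hold by construction of the identity-like map.
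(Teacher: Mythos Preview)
Your treatment of commutativity, one-idempotence, associativity and monotonicity is essentially the paper's: identity-like bijections or embeddings, with the observation that a shared frame $A$ makes the rebracketing work. No objection there.

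The gap is in the interchange law. Your ``identity-like'' map is not well defined on all states of $U\pr{A} V$. Consider what happens immediately after an \emph{interleaved} seam-crossing step on the left: from a final pair $(f_P,f_Q)$ you fire a transition $i_U\trans{a}u'$ with $a\notin A$, landing in the state $(u',i_V)$ of $U\pr{A} V$. (Your description ``$(p,q)\to(i_U,i_V)$'' is not how the sequential construction works: the seam transition goes from $(f_P,f_Q)$ directly to a successor of $(i_U,i_V)$, not to $(i_U,i_V)$ itself.) Now on the right-hand side there is no single state named $i_V$ in $QV$: the construction of $QV$ has merged $i_V$ with every final state of $Q$, so the second coordinate must be some concrete $f_Q\in F_Q$. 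Your identity map $(u,v)\mapsto(u,v)$ therefore has nowhere to send $(u',i_V)$, and the symmetric problem arises for $(i_U,v')$. Your three-case analysis never touches these mixed states, and they are exactly where the work lies.

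The paper's proof fixes this by enlarging the simulation beyond the identity: it explicitly adds the pairs $\bigl((u',i_V),(u',f_Q)\bigr)$ for $f_Q\in F_Q$ and the symmetric pairs $\bigl((i_U,v'),(f_P,v')\bigr)$ for $f_P\in F_P$, and then checks the transition clause separately for these added pairs. Once you extend $S$ this way, the case analysis goes through (synchronised seam-crossings are unproblematic because both coordinates move together; it is only the interleaved ones that force the extra pairs). Your remark about why only an inequality holds is correct in spirit, but the concrete witness you describe does not yet establish even that direction.
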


Associativity is mainly a consequence of the fact that there is only one frame for $\|$. The other axioms need to be checked thoroughly (see Proposition \ref{apro:parallel-algebra}).

Our soundness result directly follows from these two propositions.

\begin{theorem}
$(\aut,+,\cdot,\pr{A} ,*,0,1)$ is a weak concurrent Kleene algebra for any set of synchronisable actions $A\subseteq\Sigma$.
\end{theorem}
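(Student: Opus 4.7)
The plan is to observe that the theorem is essentially a bookkeeping consequence of the two propositions immediately preceding it, once we confirm that rooted $\eta$-simulation equivalence is a congruence for all the operators involved. First I would unfold the definition of weak concurrent Kleene algebra: it asks for a weak probabilistic Kleene algebra $(K,+,\cdot,*,0,1)$ together with an operator $\|$ satisfying (\ref{eq:par-assoc}--\ref{eq:exchange-law}). So the proof reduces to quoting Proposition \ref{pro:pka-soundness} (for the weak probabilistic Kleene algebra part) and Proposition \ref{pro:soundness-par} (for the concurrency axioms), and checking that these assemble coherently on the same carrier, namely the quotient $\aut/{\equiv}$.

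The key verification step is congruence. Proposition \ref{pro:sim-cong} already gives congruence of $\equiv$ with respect to $+$, and the statement ``$P\leq Q$ iff $P+Q\equiv Q$'' shows that $\leq$ on $\aut/{\equiv}$ is the partial order induced by the idempotent commutative monoid $(+,0)$. To ensure that the combined structure really sits on a single quotient, I would record that $\equiv$ is also a congruence for $\cdot$, $*$ and $\pr{A}$; these congruence facts are used implicitly inside the proofs of Propositions \ref{pro:pka-soundness} and \ref{pro:soundness-par} (each axiom is stated there modulo $\equiv$) and can be read off from the inductive constructions of the operators on $\aut$, since a simulation $S:P\to Q$ lifts pointwise through each inductive clause to a simulation on the compound automata.

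Once congruence is in hand, the theorem is immediate: Proposition \ref{pro:pka-soundness} supplies all axioms of weak probabilistic Kleene algebra on $(\aut/{\equiv},+,\cdot,*,0,1)$, and Proposition \ref{pro:soundness-par} supplies associativity (\ref{eq:par-assoc}), commutativity (\ref{eq:par-comm}), one-idempotence (\ref{eq:par-1}), monotonicity (\ref{eq:par-dist}) and the interchange law (\ref{eq:exchange-law}) for $\pr{A}$ with $A$ any fixed set of synchronisable (non-probabilistic) actions. Together these match the definition of weak concurrent Kleene algebra exactly.

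The only genuinely delicate point I would expect is making sure that the fixed frame $A$ is handled uniformly: Proposition \ref{pro:soundness-par} relies on all occurrences of $\|$ sharing the same frame (as illustrated by the small $a\pr{\{a\}}b\pr{\{c\}}a$ counterexample in the discussion preceding it), and this must be reconciled with the implicit single-signature reading of the theorem. Since the theorem is explicitly parameterised by a single $A\subseteq\Sigma$, this matches up cleanly, but I would make the dependence explicit in the statement so that associativity of $\|$ is not accidentally invoked across different frames. Beyond that, no further work is needed; the theorem is a direct corollary of the two propositions.
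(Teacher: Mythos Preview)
Your proposal is correct and matches the paper's own argument: the theorem is stated as a direct consequence of Propositions~\ref{pro:pka-soundness} and~\ref{pro:soundness-par}, with the frame $A$ fixed throughout. Your additional remarks on congruence and on the uniform-frame requirement are sound elaborations of points the paper leaves implicit, but they do not constitute a different approach.
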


In this theorem, the frame $A$ is fixed beforehand. In other words, a model of weak concurrent Kleene algebra is constructed for each possible choice of $A$. In particular, if $A$ is empty then the concurrency operator is interleaving all actions i.e. no actions are synchronised. This particular model satisfies the identity $1\pr{\emptyset} x = x$ of the original concurrent Kleene algebra found in~\cite{Hoa09}. 

The sequential and concurrent composition actually have stronger properties in the concrete model. If we consider finite automata only --- automata with finitely many states and transitions--- then we show that these two operators are \textit{conditionally Scott continuous} in the sense of~\cite{Rab11} (see Proposition \ref{pro:mult-cont} and \ref{pro:par-continuous} in the appendix).

\section{Relationship to Probabilistic Processes}\label{sec:probabilistic-aut}

Firstly, it is shown in~\cite{Mci04} that a  probabilistic choice $a\pc{p} b$ simulates the nondeterministic choice $a+b$. A similar result also holds in our setting. In the absence of internal transitions, simulation has been also defined elsewhere~\cite{Coh09,Gla90,Rab11} which we will refer to as strong simulation. Recall that $(\flip_p)a+ (\flip_p)b\leq(\flip_p)(a+b)$ is a general property of probabilistic Kleene algebra~\cite{Mci05} so it is valid under strong simulation equivalence~\cite{Coh09,Rab11}. Due to the absence of internal actions,  the middle part of the diagram of Figure \ref{fig:figure1} does not exist with respect to strong simulation equivalence.

In the context of Definition \ref{df:sim}, the right-hand simulation of Figure \ref{fig:figure1} 
is the refinement of probabilistic choice by nondeterminism. This example gives an explicit distinction between $(\flip_p)(a+b)$ and $(\flip_p) a+ (\flip_p) b$ by considering the fact that the choice in $(\flip_p) (a + b)$ can depend on the probabilistic outcome of $(\flip_p)$, but this is not the case for $(\flip_p) a + (\flip_p)b$.
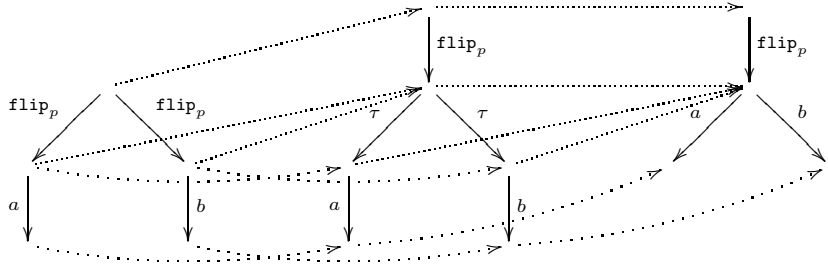
\begin{figure*}
$$
\xymatrix{
&  &	&&& \ar[d]^{\flip_p}\ars[rrrr] & && &\ar[d]^{\flip_p}&\\
	&\ar[dl]_{\flip_p}\ar[dr]^{\flip_p}\ars[urrrr]&	&&& \ar[dl]_{\tau}\ar[dr]^{\tau}\ars[rrrr] & && &\ar[ld]_{a}\ar[rd]^b &\\
\ar[d]_a\ars[urrrrr]\ars@/_/[rrrr]&&\ar[d]^b\ars@/_/[rrrr]\ars[urrr]	&&\ar[d]_{a}\ars[urrrrr]&&\ar[d]^{b}\ars[urrr] && & &\\
\ars@/_/[rrrr]	&&\ars@/_/[rrrr] 	&& \ars@/_/[urrrr]& &\ars@/_/[urrrr] && & &
}$$
\caption{Refinements between probabilistic choice and nondeterminism.}\label{fig:figure1}
\end{figure*}

Secondly, we discuss about the relationship between our concrete model and probabilisitic automata. Remind that our interpretation of probability lies in the use of actions that implicitly contain probabilistic information. In its most general form, a probabilistic choice between $n$ possibilities can be written as 

$$\flip_{p_1,\dots,p_n}\cdot(\tau_1\cdot a_1+\dots+\tau_n\cdot a_n)$$
where $\sum_ip_i = 1$. In this algebraic expression, we implicitly ensure that each guard $\tau_i$ is enabled with a corresponding probability $p_i$. Therefore, if these $\tau_i$'s are not found directly after the execution of the probabilistic action then matching them with the corresponding $p_i$ becomes a difficult task.  We call $p$-automaton~\footnote{The name $p$-automata describes probabilistic automata and as we will see later on, there is a relationship between the two of them.} a transition system as per the definition of Subsection \ref{subsec:semantic-space} such that if a probabilistic action has associated $\tau$ transitions then all of them follow that action directly. 

Another complication also arises from the use of these $\tau_i$'s. Consider the following two processes $$\flip_{p_1,p_2}\cdot(\tau_1\cdot a+\tau_2\cdot b)$$
and 
$$\flip_{p_1,p_2}\cdot(\tau_1\cdot b + \tau_2\cdot a)$$
where $p_1+p_2 = 1$. We can construct a (bi)simulation relation between the corresponding automata though the probabilities of doing an $a$ are different. Hence we need to modify the definition of $\eta$-simulation (Definition \ref{df:sim}) to account for these particular structure. 

\begin{definition}\label{df:p-sim}
A $p$-simulation $S$ between two $p$-automata $P,Q$ is a $\eta$-simulation  such that if
\begin{itemize}
\item[-] $x\trans{\flip_{p_1,\dots,p_n}} x'\trans{\tau_i} x_i''$ is a transition in $P$, 
\item[-] $y\trans{\flip_{p_1,\dots,p_n}} y'\trans{\tau_i} y_i''$ is a transition in $Q$,
\item[-] and $(x,y)\in S$
\end{itemize}
then $(x_i'',y_i'')\in S$, for each $i=1,\dots,n$.
\end{definition} 

This definition ensures that the probability of doing a certain action from $y$ is greater than doing that action from $x$. With similar proofs as in the previous Sections, we can show that the set of $p$-automata modulo $p$-simulation forms again a weak concurrent Kleene algebra. We denote $p$-$\aut$ the set of $p$-automata modulo $p$-simulation.

We will now show that this definition is a very special case of probabilistic simulation on probabilistic automata. To simplify the comparison, we assume that $\tau$ transitions occur only as part of these probabilistic choices in $p$-automata.


\begin{definition}
A probabilistic automaton is defined as a tuple $(P,\lra,\Delta,F)$ where $P$ is a set of states, $\lra$ is a set of labelled transitions from state to distributions~\footnote{We assume that all distributions are finitely supported.} of states i.e. $\lra\subseteq P\times\Sigma\times\D P$, $\Delta$ is the initial distribution and $F\subseteq P$ is a set of final states. 
\end{definition}

The notion of simulation also exists for probabilistic automata~\cite{Seg94} and, in particular, simulation and failure simulation is discussed in~\cite{Den07} where they are proven to be equivalent to may and must testing respectively.  

To give a proper definition of probabilistic simulation, we need the following notations which are borrowed from~\cite{Den07} and~\cite{Gla90}. Given a relation $R\subseteq P\times\D Q$, the lifting of $R$ is a relation $\hat{R}\subseteq \D P\times \D Q$ such that $\phi \hat{R} \psi$ iff:
\begin{itemize}
\item[-] $\phi = \sum_xp_x\delta_x$,~\footnote{We denote by $\delta_x$ the point distribution concentrated on $x$.}
\item[-] for each $x\in\supp(\phi)$ (the support of $\phi$) there exists $\psi_x\in\D Q$ such that $x R\psi_x$,
\item[-] $\psi = \sum_xp_x\psi_x$.
\end{itemize}
Similarly, the lifting of a transition relation $\trans{\tau} $ is denote $\trans{\hat{\tau}} $ whose reflexive transitive closure is denote $\ttrans{\hat{\tau}} $. For each external action $a$, we write $\ttrans{\hat{a}} $ for the sequence $\ttrans{\hat{\tau}} \trans{a} $.
\begin{definition}\label{df:probsim}
A probabilistic simulation $S$ between two probabilistic automata $P$ and $Q$ is a relation $S\subseteq R\times\D Q$ such that:
\begin{itemize}
\item[-] $(\Delta_P,\Delta_Q)\in \hat{S}$,
\item[-] if $(x,\psi)\in S$ and $x\trans{a} \phi$ then there exists $\psi'\in\D Q$ such that $\psi\ttrans{\hat{a}} \psi'$ and $(\phi,\psi')\in\hat{S}$ (for every $a\in\Sigma\cup\{\tau\}$).
\item[-] if $x\in F_P$ and $(x,\psi)\in S$ then $\supp(\psi)\subseteq F_Q$.
\end{itemize} 
\end{definition}

we denote by $\paut$ the set of probabilistic automata modulo simulation equivalence.

We can now construct a mapping $\epsilon:\praut\rightarrow \paut$ such that each instance of structure similar to $\flip_{p_1,\dots, p_n}\cdot(\tau_1\cdot a_1 + \dots + \tau_n\cdot a_n)$ is collapsed into probabilistic transitions. More precisely, let $P\in\praut$ and $\lra$ be its transition relation. The automaton $\epsilon(P)$ has the same state space as $P$ (up to accessibility with respect to the transitions of $\epsilon(P)$). The initial distribution of $\epsilon(P)$ is $\delta_{i_P}$ and the set of final states of $\epsilon(P)$ is $F_P$ again~\footnote{Notice that by  assuming the structure $\flip_{p_1,\dots,p_n}\cdot(\tau_1\cdot a_1 + \dots + \tau_n\cdot a_n$, the state between the flip action the corresponding $\tau$ transitions is never a final state. Hence we are safe to use $F_P$ as the final state of $\epsilon(P)$} .

The set of transitions $\lra_{\epsilon(P)}$ is constructed as follow. Let $x\trans{a} x'$ be a transition of $P$, there are two possible cases:
\begin{itemize}
\item[a)] if $a$ is probabilistic i.e. of the form $\flip_{p_1,\dots,p_n}$ and is followed by the $\tau_i$'s, then  the transition
$$x\trans{\tau} p_1\delta_{x_1'}+\dots+p_n\delta_{x_n'}$$
is in $\lra_{\epsilon(P)}$ where $x'\trans{\tau_i} x_i'$ is a transition in $P$.
\item[b)] else the transition $x\trans{a} x'$ is in $\lra_{\epsilon(P)}$.
\end{itemize}

We now prove that $\epsilon$ is a monotonic function from $\praut$ to $\paut$.

\begin{proposition}\label{pro:cor}
If $P\leq Q$ then $\epsilon(P)\leq\epsilon(Q)$.
\end{proposition}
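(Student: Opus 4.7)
The plan is to lift the given rooted $p$-simulation $S : P \to Q$ to a probabilistic simulation between $\epsilon(P)$ and $\epsilon(Q)$. My candidate is
\[
S' \;=\; \{\,(x, \delta_y) \mid (x,y) \in S \text{ and neither } x \text{ nor } y \text{ is an intermediate state}\,\},
\]
where an \emph{intermediate state} is one of the form $x'$ occurring as the middle of a pattern $x \trans{\flip_{p_1,\dots,p_n}} x' \trans{\tau_i} x_i''$; by construction of $\epsilon$ such states become inaccessible in $\epsilon(P)$ and $\epsilon(Q)$, so this restriction is natural.

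I then verify the three clauses of Definition~\ref{df:probsim}. The initial clause is immediate: $i_P$ and $i_Q$ are non-intermediate by the initiality condition on automata from Subsection~\ref{subsec:semantic-space}, and $(i_P, i_Q) \in S$ by rootedness, so $(i_P, \delta_{i_Q}) \in S'$ lifts to $(\delta_{i_P}, \delta_{i_Q}) \in \hat{S'}$. The final-state clause transfers directly from the $\eta$-simulation clause on final states of $S$. The step clause requires a case split on the transition $x \trans{a} \phi$ in $\epsilon(P)$. In the \emph{non-collapsed} case $x \trans{a} \delta_{x'}$ arising from $x \trans{a} x'$ in $P$, the $\eta$-simulation clause of $S$ supplies $y \Rightarrow y_1 \trans{a} y'$ in $Q$ with $(x', y') \in S$; since $y$ is non-intermediate and, under the section's standing assumption, all $\tau$-transitions in $Q$ emanate from intermediate states, $y \Rightarrow y_1$ is trivial, so $\delta_y \trans{a} \delta_{y'}$ in $\epsilon(Q)$ supplies the required $\delta_y \ttrans{\hat{a}} \delta_{y'}$, and $(x', y') \in S$ yields $(\delta_{x'}, \delta_{y'}) \in \hat{S'}$. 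In the \emph{collapsed} case $x \trans{\tau} \sum_i p_i \delta_{x_i''}$ arising from $x \trans{\flip_{p_1,\dots,p_n}} x' \trans{\tau_i} x_i''$ in $P$, $\eta$-simulation applied to the flip step yields $y \trans{\flip_{p_1,\dots,p_n}} y'$ in $Q$ with $(x', y') \in S$; the $p$-automaton structure then gives $y' \trans{\tau_i} y_i''$ for each $i$, and the $p$-simulation clause of Definition~\ref{df:p-sim} produces $(x_i'', y_i'') \in S$ for every $i$. The collapsed transition $y \trans{\tau} \sum_i p_i \delta_{y_i''}$ in $\epsilon(Q)$ then supplies $\delta_y \ttrans{\hat{\tau}} \sum_i p_i \delta_{y_i''}$, and choosing $\psi_i = \delta_{y_i''}$ in the lifting establishes $(\sum_i p_i \delta_{x_i''}, \sum_i p_i \delta_{y_i''}) \in \hat{S'}$, as required.

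The main obstacle is the careful interaction between the $\eta$-simulation's use of the internal closure $\Rightarrow$ and the collapsing performed by $\epsilon$. The observation that makes this manageable is the section's simplifying assumption that $\tau$-transitions in $p$-automata occur only as the $\tau_i$-steps after a flip; this forces $\Rightarrow$ from a non-intermediate state to be trivial, matching exactly the places where $\epsilon$ has removed intermediate states. A secondary delicate point is the index-matching in the collapsed case: a generic $\eta$-simulation would only give that each $x_i''$ relates to \emph{some} $y_{\sigma(i)}''$ for a permutation $\sigma$, which would produce the wrong distribution after recombination; the extra clause of Definition~\ref{df:p-sim} is precisely designed to force $\sigma$ to be the identity and thereby preserve the probabilities componentwise.
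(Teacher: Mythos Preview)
Your proof is correct and follows essentially the same approach as the paper's: both take the given $p$-simulation $S$, restrict it to the non-intermediate states (equivalently, to the state spaces of $\epsilon(P)$ and $\epsilon(Q)$), identify each surviving $y$ with the point distribution $\delta_y$, and verify the clauses of Definition~\ref{df:probsim} via the same two-case analysis on collapsed versus non-collapsed transitions. Your write-up is in fact more explicit than the paper's on two points the paper leaves implicit: why the section's standing assumption forces $y \Rightarrow y_1$ to be trivial in the non-collapsed case, and why the extra clause of Definition~\ref{df:p-sim} is exactly what guarantees the componentwise index-matching $(x_i'', y_i'') \in S$ in the collapsed case.
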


\begin{proof}
Assume that $S$ is a $p$-simulation from $P$ to $Q$. Consider the exact same relation but restricted to the state space of $\epsilon(P)$ and $\epsilon(Q)$. We show that this restriction is a probabilistic simulation.
\begin{itemize}
\item[-] Obviously, $(\delta_{i_P},\delta_{i_Q})\in \hat{S}$.
\item[-] Let $x\trans{a} \phi$ and $(x,\psi)\in\hat{S}$. Since $\tau$ transitions only occur as part of probabilistic choices, we have two possibilities:
\begin{itemize}
\item $x\trans{\tau} p_1\delta_{x_1'}+\dots +p_n\delta_{x_n'}$ is a transition of $\epsilon(P)$ and $(x,\psi)\in S$ where $\psi = \delta_y$. Since $(x,y)$ belongs to the original $S$. In this case, $y\trans{\tau} p_1\delta_{y_1'}+\dots +p_n\delta_{y_n'}$ is a transition of $\epsilon(Q)$ and each $(x_i',y_i')$ belongs to the original $S$ (Definition of $p$-simulation).
\item $x\trans{a} x'$ and $a$ is an external action. Therefore there are two possibilities again, $y\trans{\tau_i} y_i\trans{a} y'$ or $y\trans{a} y'$.  In both cases, we have $(x',y')\in S$.
\end{itemize}
\item[-] Conservation of final states follows easily from the fact that $S$ is a $p$-simulation.\qedhere
\end{itemize}
\end{proof}

Since our Definition (\ref{df:probsim}) implies the definition of probabilistic simulation in~\cite{Den07}, we conclude that maximal probability of doing a particular action in $p$-automata is increased by $p$-simulation. This remark provides a formal justification of our earlier example. That is, Equation~(\ref{eq:vm}) ensures that the maximal probability that a buyer will be satisfied when using the probabilistic vending machine is at least $1/2$ because the maximal probability of a trace containing $\tea$ in the automata described by 
$$\coin\cdot\flip\cdot(\tau_h\cdot(\tea+1) + \tau_t$$
is $1/2$. 

In the proof of proposition \ref{pro:cor}, the simulation constructed is a very particular case of probabilistic simulation so it is too weak to establish certain relationships between $p$-automata. For instance, the automaton represented by $a\pc{p} (a\pc{q} b)$ should be equivalent to $a\pc{p+q-pq} b$ but Definition \ref{df:p-sim} will not provide such equality. This line of research is part of our future work where we will study proper probabilistic automata and simulations against weak concurrent Kleene algebra.

\section{Algebraic Testing}\label{sec:algebraic-testing}

In this section, we describe an algebraic treatment of \emph{testing}. Testing is a natural ordering for processes that was studied first in~\cite{Nic83}. The idea is to ``measure" the behaviour of the process with respect to the environment. In other words, given two processes $x$ and $y$ and a set of test processes $T$, the goal is to compare the processes $x\|t$ and $y\|t$ for every $t\in T$. In our case, the set $T$ will contain all processes.

We consider a function $o$ from the set of terms to the set of internal expressions $I = \{x\ |\ x\leq 1\}$. The function $o:T_\Sigma\rightarrow I$ is defined by
$$\begin{array}{lll}
o(x) = x\textrm{ if }x\in I & &o(st) = o(s)o(t)\\
o(a) = \tau\textrm{ for any a }\in\Sigma-I && o(s^*) = 1\\
o(s+t) = o(s)+ o(t) &&o(s\|t)\leq o(s)o(t)
\end{array}$$

In the model, the function $o$ is interpreted by substituting each external action with the internal action $\tau$ ($o(a) = \tau$ for any $a\in\Sigma-I$). Then any final state is labelled by $1$ and deadlock states are labelled by $0$. Inductively, we label a state that leads to some final state by $1$, else it is labelled by $0$. This is motivated by the fact that $x0=0$ for any $x\in I$ so each transition leading to \textit{deadlock states only} will be removed. Therefore, only states labelled by $1$ will remain and the transitions between them. Hence, $o(s)\neq 0$ iff the resulting automaton contains at least one state labelled by $1$. In other words, $o(s) = 0$ iff $x$ \textit{must not terminate successfully}. 

Without loss of generality (by considering automata modulo simulation), we assume that $\tau$ is the only internal action in $\Sigma$ and it satisfies $\tau\tau = \tau$. This equation is valid in the concrete model.

The existence of a well-defined function $o$ satisfying these conditions depends on our definition of simulation. That is, we can show that if $P\leq Q$ then $o(P)\leq o(Q)$ where we have abused notation by writing $o(P)$ as the application of $o$ on the term associated to $P$. A detailed discussion about this can be found in the appendix under Remark \ref{rem:remark-o}.

\begin{definition}
The \textit{may testing order} is given by
$$x\may{} y\quad \textrm{ iff }\quad \forall t\in T_\Sigma.\left[o(y\| t) = 0\Rightarrow o(x\| t) = 0\right].~\footnote{Notice $\|$ should be framed because some external actions are not synchronised. But in the setting of testing, we can also assume that all external actions are synchronised which permits to follow up all external actions present in the process.}$$
\end{definition}

We now provide some results about algebraic may testing. It follows from monotonicity of $\|$ with respect to $\leq$ (Proposition \ref{pro:elementary-consequences}) that may ordering $\may{} $ is weaker than the rooted $\eta$-simulation order.

\begin{proposition}
$x\leq y$ implies $x\may{} y$.
\end{proposition}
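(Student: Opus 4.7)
The plan is to chain together two monotonicity facts: monotonicity of the concurrency operator and monotonicity of the observation function $o$. Assume $x \leq y$ and fix an arbitrary test $t \in T_\Sigma$; I want to show that $o(y \| t) = 0$ forces $o(x \| t) = 0$.

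First I would invoke Proposition \ref{pro:elementary-consequences}(1), which asserts that all the operators are monotonic. Applied to $\|$, this yields $x \| t \leq y \| t$. Next I would appeal to the monotonicity of $o$ with respect to the simulation order, as guaranteed by the discussion preceding the definition of may testing (formally, Remark \ref{rem:remark-o} in the appendix): $P \leq Q$ implies $o(P) \leq o(Q)$. This gives $o(x \| t) \leq o(y \| t)$.

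Finally, because $0$ is the additive identity, $0 \leq z$ for every $z$ in any weak probabilistic Kleene algebra (since $0 + z = z$). Combining $o(x \| t) \leq o(y \| t) = 0$ with $0 \leq o(x \| t)$ and antisymmetry of $\leq$ on equivalence classes yields $o(x \| t) = 0$. Since $t$ was arbitrary, $x \may{} y$ follows immediately from the definition.

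The proof is essentially a one-line chase, so there is no real obstacle; the only subtlety is that one must quote the monotonicity of $o$, which is not among the axioms of weak concurrent Kleene algebra but is a semantic fact established for the concrete interpretation and flagged in Remark \ref{rem:remark-o}. No case analysis on the shape of $t$ is needed, since monotonicity handles all tests uniformly.
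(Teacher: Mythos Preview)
Your proof is correct and follows essentially the same line as the paper, which simply remarks that the result ``follows from monotonicity of $\|$ with respect to $\leq$.'' You spell out the additional step the paper leaves implicit---namely that one also needs monotonicity of $o$ (Remark~\ref{rem:remark-o}) to pass from $x\|t \leq y\|t$ to $o(x\|t)\leq o(y\|t)$---which is exactly the right thing to do.
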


In fact, $\may{} $ is too weak compared to $\leq$: may testing is equivalent to language equivalence. Given a term $s$, the language $Tr(s)$ of $s$ is the set of finite words formed by external actions and are accepted by the automata represented by $s$. In other word, it is the set of finite traces in the sense of CSP which lead to final states. The precise definition of this language equivalence can be found in the appendix and so is the proof of the following proposition (Proposition \ref{apro:may-language} of the appendix).
\begin{proposition}\label{pro:may-equals-language}
In $\aut$, $\may{} $ reduces to language equivalence.
\end{proposition}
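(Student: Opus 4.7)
The goal is to show that $x \may{} y$ holds exactly when $Tr(x) \subseteq Tr(y)$, so the induced equivalence is language equivalence on the external traces reaching final states.

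First I would unpack the meaning of $o(s) = 0$: by the construction of $o$ described just above the statement, $o(s) = 0$ precisely when the automaton obtained by relabelling every non-internal action as $\tau$ admits no reachable final state, equivalently, when $s$ has no externally observable run ending in a final state. In particular $o(x\|t) \neq 0$ iff $x \| t$ has some path from $(i_x,i_t)$ to a pair of final states.

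For the direction $Tr(x) \subseteq Tr(y) \Rightarrow x \may{} y$, I would argue by contraposition. Assume $o(x\|t) \neq 0$, so $x\|t$ has a path reaching a final pair $(f_x,f_t)$. Under the footnote's assumption that in the testing setting every external action is synchronised, the external labels along this path form a single word $w$ that is witnessed independently on both components: the $x$-projection interleaves internal steps of $x$ with the $a_i$'s and lands in $F_x$, so $w \in Tr(x)$, and similarly $w \in Tr(t)$. Then $w \in Tr(y)$ by hypothesis, and the $y$-run and $t$-run on $w$ can be recombined into a successful path of $y\|t$, yielding $o(y\|t) \neq 0$.

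For the converse, again by contraposition, suppose $w = a_1 \cdots a_n \in Tr(x) \setminus Tr(y)$, and take the test $t_w$ to be the sequential chain $a_1 \cdot a_2 \cdots a_n$, i.e.\ the automaton with one transition per $a_i$ ending in a single final state. Every transition of $t_w$ is external and therefore synchronised, so $t_w$ reaches its final state only after the joint process produces exactly $w$ externally. Because $w \in Tr(x)$, the $x$-run on $w$ (with its interleaved internal transitions) combines with $t_w$ into a successful path, so $o(x\|t_w) \neq 0$. Because $w \notin Tr(y)$, no run of $y$ ending in $F_y$ has external trace $w$, so $y\|t_w$ cannot reach a final pair and $o(y\|t_w) = 0$. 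Hence $x \not\may{} y$.

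The main obstacle I expect is the clean separation of internal from external steps when projecting a path of $x\|t$ onto its two components: I need to verify that in the CSP-style composition $\pr{A}$ the internal transitions on one side are genuinely interleaved rather than synchronised, so that an accepting run of $x\|t$ decomposes into independent accepting runs of $x$ and $t$ on the same external word. Once this projection/reassembly lemma is in place, the remaining steps (monotonicity of $o$ under rooted $\eta$-simulation via Remark~\ref{rem:remark-o}, the well-definedness of $Tr$, and the elementary fact that $t_w$ is a chain in $\aut$) are routine.
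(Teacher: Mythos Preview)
Your proposal is correct and follows essentially the same route as the paper. The paper's proof packages your projection/reassembly step as the single CSP-style identity $Tr(x\|z)=Tr(x)\cap Tr(z)$ (valid because in the testing setup all external actions are synchronised), and then both directions become one-liners: for $Tr(x)\subseteq Tr(y)\Rightarrow x\may{} y$ one intersects with $Tr(z)$, and for the converse one uses an element $t\in Tr(x)$ itself as the test, exactly your $t_w$. So the only difference is that you argue the decomposition/recombination at the level of individual accepting paths, whereas the paper quotes the trace identity once and works set-theoretically.
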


We have shown that $\may{} $ is equivalent to language equivalence and hence it is weaker than our simulation order. This is also a consequence of the fact that our study of may testing is done in a qualitative way because the probabilities are found implicitly within actions. A quantitative study of probabilistic testing orders can be found in~\cite{Den07}.

\section{Case Study: Rabin's Choice Coordination}\label{sec:rabin-protocol}

The problem of choice coordination is well known in the area of distributed systems. It usually appears in the form of processes voting for a common goal among some possibilities. Rabin has proposed a probabilistic protocol which solves the problem~\cite{Rab82} and a sequential specification can be found in~\cite{Mci04}.

We specify the protocol in our algebra and prove that a fully concurrent specification is equivalent to a sequential one. Once this has been done, the full verification can proceed by reusing the techniques for sequential reasoning~\cite{Mci04}.

The protocol consists of a set of tourists and two places: a church $C$ and a museum $M$. Each tourist has a notepad where he keeps track of an integer $k$. Each place has a board where tourists can read and write. We denote by $L$ (resp. $R$) the value on the church board (resp. museum board). 

In this section, we use $\cdot$ again for the sequential composition to make the specifications clearer.

\begin{itemize}
\item The church is specified as $C = (c!L)^*\cdot(c?L)$ where the channel $c$ represents the church's door. $c!L$ means that the value of $L$ is available to be read in the channel $c$ and $c?L$ waits for an input which is used as value for $L$ in the subsequent process.

In other words, each tourist can read as many times as they want from the church board but write on it only once. Repeated writing will be considered in the specification of the protocol.

Similarly, the museum is specified as $M = (m!R)^*\cdot(m?R)$.

\item Each tourist is specified as $P(\alpha,k)$ where $\alpha\in\{c,m\}$ is the door before which the tourist currently stands and $k$ is the actual value written on his notepad. A detailed description of $P$ can be found in the appendix but roughly, we have 
$$P(\alpha,k) = (\alpha?K)\cdot\mathtt{rabin}\cdot[\alpha:=\underline{\alpha}]~\footnote{Any action written within square brackets will denote internal action (see appendix for the detailed specification).}$$
where $\underline c = m$ and $\underline m = c$. In other words, the tourist reads the value on the place specified by $\alpha$, executes Rabin's protocol \texttt{rabin} and then goes to the other place. Notice that the process $\texttt{rabin}$ contains the probabilistic component of Rabin's protocol. Essentially, it describes the rules that are used by each tourist to update their actual value for $k$ with respect to the value on the board and vice versa. 

The whole specification of the protocol executed by each tourist is described by the automata of Figure \ref{fig:rabin}
\begin{center}
\begin{figure*}
$$\xymatrix{
&P(\alpha,k)\ar[d]_{\alpha?K}&&\\
&\ar[dl]_{[K=here]}\ar[dr]^{[K\neq here]}&&\\
\bullet	& \ar[l]^{\alpha!here}& \ar[l]_{[k>K]}\ar[dl]^{[k<K]}\ar[d]^{[k=K]}&\\
	& \ar[dl]^{[k:=K]} 			&\ar[d]^{\flip_{1/2}}& \\
\ar[d]_{\alpha!k}& &\ar[dl]_{\tau_h}\ar[dr]^{\tau_t} &\\
\ar[d]_{[\alpha := \underline\alpha]}& \ar[dr]_{[k := K+2]}& &\ar[dl]^{[k:=\overline{K+2}]}\\
\circ&&\ar[d]^{\alpha!k}&\\
& &\ar[d]^{[\alpha := \underline{\alpha}]} &\\
& &\circ &
}$$
\caption{$p$-automaton that describes the protocol $P(\alpha,k)$ executed by each tourist.}\label{fig:rabin}
\end{figure*}
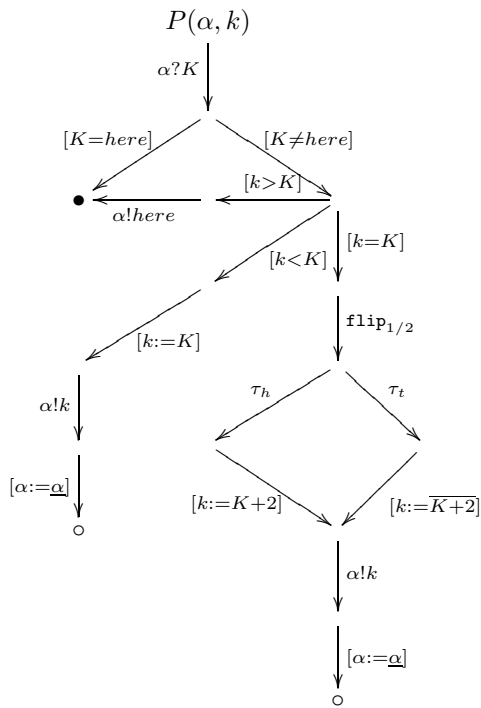
\end{center}

\end{itemize}

We are ready to specify the whole system. Assume we have two tourists $P$ and $Q$ (our result generalises easily to $n$ tourists). The tourists' joint action is specified as $(P + Q)^*$. This ensures that when a tourist has started his turn by reading the board, he will not be interrupted by any other tourist until he is done and goes inside the current place or to the other place. This condition is crucial for the protocol to work properly. 

The actions of the locations process are specified by $(M+C)^*$ which ensures that each tourist can be at one place at a time only --- this is a physical constraint. Now, the whole system is specified by 
\begin{equation}\label{eq:spec}
\mathtt{init}\cdot\left([P(\alpha,u) + Q(\beta,v)]^*\pr{\{c,m\} } (M + C)^*\right)
\end{equation}

where $\mathtt{init}$ is the initialisation of the values on the boards, notepads and initial locations. Specification \ref{eq:spec} describes the most arbitrary behaviour of the tourists compatible with visiting and interacting with the locations in the manner described above. Rabin's design of the protocol means that this behaviour is equivalent to a serialised execution where first one location is visited, followed by the other.   We can write that behaviour behaviour as $[((P+Q)\|M)^*((P+Q)\|C)^*]^*$, where (for this section only)  we denote the concurrency operator by $\|$ instead of $\pr{\{c,m\}} $ to make the notation lighter. The next theorem says that this more uniform execution is included in $S=  [P(\alpha,u) + Q(\beta,v)]^*\| (M + C)^*$, described by Specification \ref{eq:spec}.


\begin{theorem}\label{pro:dupl}
We have $$S \geq [((P+Q)\|M)^*((P+Q)\|C)^*]^*$$
\end{theorem}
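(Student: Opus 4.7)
The plan is purely algebraic: the entire statement can be derived from the axioms of weak concurrent Kleene algebra, without invoking the automaton semantics of Section~\ref{sec:concrete-model}. The key observation is that the interchange law~(\ref{eq:exchange-law}), combined with the star-idempotence facts collected in Proposition~\ref{pro:elementary-consequences}, is strong enough to absorb the serialised iteration $[((P+Q)\|M)^*((P+Q)\|C)^*]^*$ into the fully concurrent form $(P+Q)^*\|(M+C)^*$ that defines $S$.

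The calculation would proceed in four short moves. First, using part~\ref{eq:subdist} of Proposition~\ref{pro:elementary-consequences} (i.e.\ $(s\|t)^* \leq s^*\|t^*$) together with monotonicity of $\cdot$ and $*$, I would replace each inner star by
\[
((P+Q)\|M)^* \leq (P+Q)^*\|M^*, \qquad ((P+Q)\|C)^* \leq (P+Q)^*\|C^*.
\]
Second, I would apply the interchange law to the resulting product with the assignment $x=u=(P+Q)^*$, $y=M^*$, $v=C^*$, yielding the upper bound $((P+Q)^*(P+Q)^*)\,\|\,(M^*C^*)$. The left factor collapses to $(P+Q)^*$ via the standard Kleene identity $x^*x^* = x^*$, which survives in the weak setting because $1\leq x^*$ (from left unfold) and $x x^* \leq x^*$ (so that left induction yields $x^*x^*\leq x^*$).

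Third, I would bound the right factor using part~4 of Proposition~\ref{pro:elementary-consequences}: $M^*C^* \leq (M^*C^*)^* = (M+C)^*$, and monotonicity of $\|$ then delivers $(P+Q)^* \| (M^*C^*) \leq (P+Q)^* \| (M+C)^*$. Fourth, the outer star is removed by monotonicity of $*$ and part~\ref{eq:star-idem} of Proposition~\ref{pro:elementary-consequences}: $[(P+Q)^*\|(M+C)^*]^* = (P+Q)^*\|(M+C)^* = S$. Chaining the four bounds proves the theorem.

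The main obstacle is conceptual rather than computational: one must recognise that the interchange law is the exact mechanism for re-serialising the concurrent factors, and that the two copies of $(P+Q)^*$ which it produces as a by-product can be reunified by $x^*x^*=x^*$. Once these ingredients are in place the derivation is a short monotonic chain, and correspondingly straightforward to discharge in a first-order theorem prover, consistent with the authors' remark that their algebraic proofs are mechanised in Isabelle/HOL.
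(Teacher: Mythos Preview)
Your proposal is correct and matches the paper's intended argument: the paper's own proof is simply the one-line remark that this is ``a simple application of Proposition~\ref{pro:elementary-consequences}'', and your four-move derivation is precisely the unfolding of that remark, using parts~\ref{eq:star-idem}, \ref{eq:subdist} and~4 of that proposition together with the interchange law and the standard identity $x^*x^*=x^*$.
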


The proof is a simple application of Proposition \ref{pro:elementary-consequences}. Theorem \ref{pro:dupl} means $S$ could execute all possible actions related to door $M$, and then those at door $C$, and then back to door $M$ and so one. In fact, we can also prove the converse i.e. Proposition \ref{pro:dupl} could be strengthen to equality. But for that, we need the continuity of the operators $\cdot$ and $\|$.

\begin{theorem}\label{thm:rabin}
In the concrete model, the specification of Rabin's protocol satisfies $$S = [((P+Q)\|M)^*((P+Q)\|C)^*]^*$$
\end{theorem}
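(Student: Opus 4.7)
The plan is to establish the missing direction $S \leq T$, where $T$ abbreviates $[((P+Q)\|M)^*((P+Q)\|C)^*]^*$, since Theorem~\ref{pro:dupl} already supplies $S \geq T$. I would work in the concrete model $\aut$ modulo rooted $\eta$-simulation, leaning on soundness (Propositions~\ref{pro:pka-soundness} and~\ref{pro:soundness-par}) together with the continuity of $\cdot$ and $\|$ on finite automata (Propositions~\ref{pro:mult-cont} and~\ref{pro:par-continuous}).

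First I would simplify the left-hand side: applying item~(4) of Proposition~\ref{pro:elementary-consequences} gives $(M+C)^* = (M^*C^*)^*$, so $S = (P+Q)^* \| (M^*C^*)^*$. I would then invoke continuity of both operators to express each Kleene star as a supremum of its finite unfoldings and to distribute those suprema outside of $\|$ and $\cdot$. This reduces the global inequality to a family of inequalities, one for each finite ``shape word'' $M^{i_1}C^{j_1}\cdots M^{i_r}C^{j_r}$, stating that $(P+Q)^n$ concurrently composed with such a word is simulated by a sufficiently large finite unfolding of $T$.

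Second, I would construct the required rooted $\eta$-simulation at each finite level by exploiting the specific structure of Rabin's specification. Each location process $M = (m!R)^*(m?R)$ and $C = (c!L)^*(c?L)$ allows arbitrarily many synchronised reads but terminates after a single synchronised write, while each tourist run $P(\alpha,u)$ or $Q(\beta,v)$ begins with a single $\alpha?K$ synchronisation, performs at most one $\alpha!k$ synchronisation on the same door, and ends by switching to $\underline{\alpha}$. Any non-deadlocking trace of the concurrent product must therefore decompose into alternating ``$M$-phases'' and ``$C$-phases'', each of which is itself a trace of $(X\|M)^*$ or $(X\|C)^*$ with $X = P+Q$. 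Gluing the phases according to the chosen shape word yields the simulation into a finite unfolding of $T$, and continuity of $\|$ and $\cdot$ then lifts these finite-level simulations to the global inequality.

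The hardest step will be this last one, because the direction we need runs against the grain of the interchange law $(x\|y)(u\|v) \leq (xu)\|(yv)$: interchange refines a serialised composition into a concurrent one, whereas here we must recover a serialised form from a concurrent one. The one-shot write semantics of $M$ and $C$, together with the single-door signature of each tourist run, is exactly what prevents genuine interleaving between door-$m$ and door-$c$ activity and so makes the block decomposition go through; combined with Theorem~\ref{pro:dupl} this yields the claimed equality.
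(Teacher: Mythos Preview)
Your outline is in the right spirit but it diverges from the paper's argument at the decisive point and leaves exactly that point unproved.

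The paper does \emph{not} rewrite $(M+C)^*$ via Proposition~\ref{pro:elementary-consequences}(4) first, nor does it build simulations for individual ``shape words''. Instead it keeps $X=P+Q$ and $Y=M+C$ together and extracts from the protocol three \emph{equalities} that hold in the model:
\[
X\cdot A\ \|\ Y\cdot B \;=\; (X\|Y)\cdot(A\|B),\qquad X\cdot A\ \|\ 1 \;=\;0,\qquad Y\cdot B\ \|\ 1 \;=\;0.
\]
The first says that for these particular processes the interchange law is an \emph{equality}, not just an inequality; the other two say a tourist phase (resp.\ a location phase) cannot terminate without a matching partner. With these in hand the proof is purely algebraic: set $T_{m,n}=(1+X)^m\|(1+Y)^n$, use continuity to write $S=\sup_{m,n}T_{m,n}$, expand once on each side, and the three equalities collapse the cross terms to yield the recurrence $T_{m,n}=(1+X\|Y)\cdot T_{m-1,n-1}$, hence $T_{m,n}=(1+X\|Y)^{\min(m,n)}$ and $X^*\|Y^*=(X\|Y)^*$. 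Only then is Proposition~\ref{pro:elementary-consequences}(4) applied, together with the distributivity $(P+Q)\|(M+C)=(P+Q)\|M+(P+Q)\|C$ (which is an equality here because a single tourist run touches only one door), to reach the stated form.

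Your plan, by contrast, postpones the algebra and promises to build a rooted $\eta$-simulation for each finite shape $M^{i_1}C^{j_1}\cdots M^{i_r}C^{j_r}$ by ``gluing phases''. That is precisely the step you yourself flag as hardest, and you do not actually carry it out; talking about how \emph{traces} decompose into $M$-phases and $C$-phases is not enough, since what is required is a simulation respecting the full branching structure. The observation you make about one-shot writes and single-door tourist runs is correct and is exactly what justifies the paper's three equalities---but once those equalities are stated, the hand-built simulations become unnecessary and the argument becomes a short calculation. If you want to complete your route, the cleanest way is to prove $X\cdot A\|Y\cdot B=(X\|Y)(A\|B)$ for these $X,Y$ and then follow the paper's recurrence; attempting the phase decomposition directly at the simulation level will amount to reproving that equality in a more laborious form.
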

The proof of this theorem depends heavily on the fact that the concurrent and sequential compositions are continuous in the the concrete model. The complete proof can be found in the appendix.

In the proof, if we stopped at the distribution over $\|$, we obtain the equivalent specification
$$S = [(P+Q)\|M + (P+Q)\|C]^*$$
which describes a simpler  situation where P or Q interacts at the Museum or at the Church. This is similar to the sequential version found in~\cite{Mci04}, which can be treated by standard probabilistic invariants to complete a full probabilistic analysis of the protocol. 

\section{Conclusion}

An algebraic account of probabilistic and concurrent system has been presented in this paper. The idea was to combine probabilistic and concurrent Kleene algebra. A soundness result with respect to automata and rooted $\eta$-simulation has been provided. The concrete model ensures not only the consistency of the axioms but provides also a semantic space for systems exhibiting probabilistic, nondeterministic and concurrent behaviour. We also showed that the model has stronger properties than just the algebraic axiomatisation. For instance, sequential and concurrent compositions are both continuous in the case of finite automata.

We provided some applications of the framework. An algebraic account of may testing has been discussed in Section \ref{sec:algebraic-testing}. It was shown that may ordering reduces to language equivalence. 

We also provided a case study of Rabin's solution to the choice coordination problem. A concurrent specification was provided and it was shown to be structurally equivalent to the sequential one given in~\cite{Mci04}. 

Though the algebra was proven to be powerful enough to derive non-trivial properties for concrete protocols, the concrete model still needs to be refined. For instance, the inclusion of tests is important especially for the construction of probabilistic choices. Tests need to be introduced carefully because their algebraic characterisation are subtle due to presence of probability. We also need to improve and refine the manipulation of quantitative properties in the model as part of our future work. 

Finally, it is customary to motivate automated support for algebraic approaches. The axioms system for weak concurrent Kleene algebra is entirely first-order, therefore proof automation is supported and automatised version of our algebraic proofs can be found in our repository.

\bibliographystyle{plain}
\bibliography{references-comp}

\newpage
$$$$
\newpage

\appendix

\section*{Appendix}
The following proofs, diagrams, remarks and other results are only included to add further clarification of the contents of the present paper. It is left to the discression of the reviewers to choose whether they will read these proofs or not.

\section{Diagrams, Theorems and Proofs}


%
\textbf{Diagram of the Operators:} The construction are done inductively from $0,1$ and elements of the alphabet $\Sigma$.
\begin{itemize}
\item[-]\textbf{Deadlock:} $0$. \\
This is the automaton that has only one state, no transition and no final state.
\item[-]\textbf{Skip:} $1$ \\
This is the automaton $\circ$ which has only one state which is both initial and final and has no transition.
\item[-]\textbf{Single action:} \\
The automaton associated to $a\in\Sigma$ is $i\trans{a} \circ$ where $i$ is the initial state and $\circ$ is a final state.
\item[-]\textbf{Addition:} $P+Q$.\\
This is constructed by identifying the initial states of $P$ and $Q$. This construction is allowed because of the initiality condition (Figure \ref{fig:p+q}).

\begin{figure*}
$$
\xymatrix{
 	& i_P\ar[dl]_a\ar[d]^b & + & i_Q\ar[d]_c\ar[dr]^d & & = & P'_1& \ar[l]_a\ar[dl]^bi_{P+Q}\ar[r]^d\ar[dr]_c&Q'_2\\
P'_1&P'_2 	&	& Q'_1					& Q'_2& & P'_2& & Q'_1
}
$$
\caption{Automaton for $P+Q$.}\label{fig:p+q}
\end{figure*}
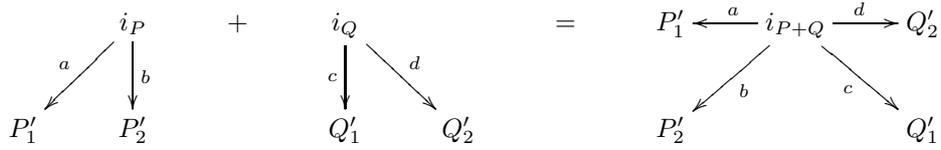

\item[-]\textbf{Multiplication:} $PQ$.\\
This is constructed by identifying each final state of $P$ with the initial state of $Q$ (Figure \ref{fig:pq}). 
\begin{figure*}
$$
\xymatrix{
& P'\ar[d]^a & \cdot & i_Q\ar[d]_c\ar[dr]^d & & = & & P'\ar[d]^a&\\
\dots&\ar[l]_{\dots}\circ&	& Q'_1		& Q'_2& & \dots& \ar[l]_{\dots}\bullet \ar[d]_c\ar[dr]^d&\\
& & & & & & & Q'_1& Q'_2
}
$$
\caption{Automaton for $PQ$. The symbol $\circ$ denotes a final state and the symbol $\bullet$ is final if and only if $i_Q$ is final in $Q$. Notice that this construction is done for each final state of $P$.}
\label{fig:pq}
\end{figure*}
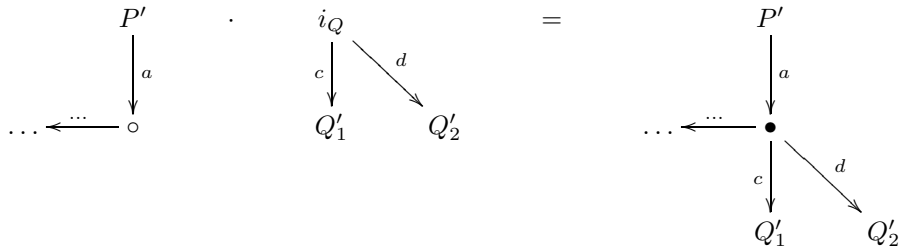

\item[-]\textbf{Concurrency:} $P\pr{A} Q$\\
This is constructed  as a sub-automaton of the Cartesian product of $P$ and $Q$ following CSP~\cite{Hoa78}. Assuming $a\in A$ and $b,d\notin A$, the concurrent composition $P\pr{A } Q$ is inductively constructed as in Figure \ref{fig:p|q}
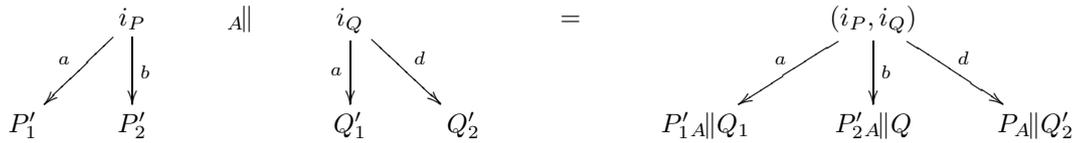
\begin{figure*}
$$
\xymatrix{
 	& i_P\ar[dl]_a\ar[d]^b & \pr{A} & i_Q\ar[d]_a\ar[dr]^d & & =   & & \ar[dl]_a(i_{P},i_Q)\ar[d]^b\ar[dr]^d&\\
P'_1&P'_2 	&	& Q'_1					& Q'_2&  &P'_{1}\ \!\! \pr{A} Q_1& P'_{2}\ \!\!\pr{A}  Q  & P\pr{A } Q'_2
}
$$
\caption{Automaton for $P\pr{A} Q$. The action $a$ has been synchronised and $b,d$ were interleaved. Notice that $b$ or $d$ could be internal. The initial state of the automata is the pair $(i_P,i_Q)$ and the final states are the elements of $F_P\times F_Q$.}
\label{fig:p|q}
\end{figure*}

Notice that $A\subseteq\Sigma$ is a set of synchronised action and does not contain any (strictly) probabilistic actions such as $\flip(p)$, for $p\in]0,1[$.

\item[-]\textbf{Kleene star:} $P^*$\\
This is the result of repeating $P$ allowing a successful termination after each  (possibly empty) full execution of $P$. 
In the diagram of Figure \ref{fig:p*}, we just picture one transition from the initial state and one final state. The construction needs to be performedfor each initial transition and final state.
\begin{figure*}
$$
\xymatrix{
(i_P\ar[r]^a& P'\ar[r]^b& \circ )^* & = &  \circ\ar[r]^a & P'\ar@/^/[r]^b& \circ\ar@/^/[l]^a}
$$
\caption{Automaton for $P^*$.}\label{fig:p*}
\end{figure*}
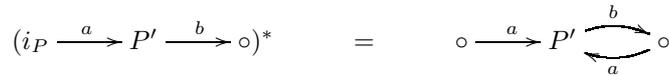
Notice the initial state of $P^*$ is a final state too.
\end{itemize}

\begin{proposition}\label{apro:stability}
These operations are well defined on $\aut$ that is if $P,Q\in\aut$ then $P+Q, PQ, P\pr{A} Q$ and $P^*$ are elements of $\aut$.
\end{proposition}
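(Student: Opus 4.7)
The plan is to verify, for each of the four constructions, that the output automaton inherits the two closure conditions (reachability and initiality) from its inputs. This is a case analysis with essentially identical structure in each case: we unwind the construction as given in the diagrams and show (a) that every new state sits on a finite path starting from the new initial state, and (b) that the new initial state has no incoming transition.

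For $P+Q$, the initial state is the identified pair $(i_P, i_Q)$, whose outgoing transitions are exactly the union of the outgoing transitions of $i_P$ and $i_Q$. Reachability follows because any state of $P$ (resp.\ $Q$) is reachable from $i_P$ (resp.\ $i_Q$), and concatenating such a path with the identification yields a path from the new initial state. Initiality follows because the only transitions into the identified state would have to come from transitions into $i_P$ in $P$ or into $i_Q$ in $Q$, both excluded by the initiality of the inputs. The argument for $PQ$ is similar: the initial state is $i_P$; reachability extends from states of $P$ to states of $Q$ via the identification of copies of $i_Q$ with the final states of $P$; and initiality of $PQ$ at $i_P$ follows because no transition in $P$ enters $i_P$, while transitions coming from a copy of $Q$ either stay inside that copy (and cannot enter $i_Q$ by initiality of $Q$, and hence cannot enter $i_P$ unless $i_P$ was itself a final state of $P$, in which case $i_P$ plays the role of a copy of $i_Q$ and the same argument applies).

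For $P \pr{A} Q$ the construction already restricts to the reachable sub-automaton of the Cartesian product, so reachability is immediate. Initiality at $(i_P, i_Q)$ is the key point: any transition into $(i_P, i_Q)$ arises either from a synchronised step or from an interleaved step, and in both cases the projection on at least one component would exhibit a transition into $i_P$ (in $P$) or into $i_Q$ (in $Q$), contradicting initiality of the input. The Kleene star $P^*$ is the most delicate case and the one I expect to be the main obstacle, because the construction simultaneously makes the initial state final and identifies copies of $i_P$ with the final states of $P$, so one has to be careful about what counts as a transition of $P^*$. The idea is that the outgoing transitions of the new initial state mirror those of $i_P$, and the new transitions added at final states of $P$ also mirror those of $i_P$; since $P$ has no transition into $i_P$, none of these copied transitions land on the new initial state either, establishing initiality. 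Reachability follows because the states of $P^*$ are exactly the states of $P$ (up to the identifications), and every such state is reachable from $i_P$ in $P$, hence from the new initial state after the identifications are performed.

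Packaging the four cases gives the proposition. The case verifications are routine once one has fixed the precise diagrammatic convention for each operator; the only real subtlety is in the $*$ case, where the dual role of the new initial state (both source of $i_P$-like transitions and target of edges from final states) must be tracked carefully so that the copied transitions at final states are not accidentally directed into the initial state of $P^*$.
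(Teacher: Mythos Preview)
Your proposal is correct and follows essentially the same approach as the paper's own proof: a case analysis over the four constructions, checking reachability and initiality in each. The paper's version is much terser --- it dispatches $P+Q$, $P\pr{A}Q$, and $P^*$ with ``it is easy to see from the diagrams'' and only spells out reachability for $PQ$ --- whereas you give explicit arguments for every case, including the subtleties (the case where $i_P$ is final in the $PQ$ construction, and the care needed for the copied transitions in $P^*$). Your level of detail is an improvement on the paper, not a deviation from it.
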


\begin{proof}
The proof is by induction on the structure of the automata $P$ and $Q$. For the base case, it is obvious that $0,1$ and $i\trans{a} \circ$ satisfy the reachability and initiality conditions.

Let $P,Q\in\aut$. It is easy to see from the diagrams that $P+Q,P\|Q$ and $P^*$ belongs to $\aut$ too. $PQ$ satisfies the initiality condition because the initial state is $i_P$. For reachability, let $x\in Q$. Then $x$ is reachable from $i_Q$ which in turn is reachable from $i_P$ by the definition of sequential composition.
\end{proof}

\begin{proposition}\label{apro:sim-equivalence}
The following statements hold.
\begin{enumerate}
\item The relational composition of two rooted $\eta$-simulations is again a rooted $\eta$-simulation. That is, if $S,T$ are rooted $\eta$-simulations then $S\circ T$ is also a rooted $\eta$-simulation, where $\circ$ denotes relational composition.
\item The simulation relation $\leq$ is a preorder on $\aut$.
\end{enumerate}
\end{proposition}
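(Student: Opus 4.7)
The plan is to prove the two claims in sequence, since (2) reduces almost entirely to (1) plus reflexivity. I will write $S : P \to Q$ and $T : Q \to R$ for the two given rooted $\eta$-simulations and set $U := S \circ T$, meaning $(x,z) \in U$ iff there exists $y \in Q$ with $(x,y) \in S$ and $(y,z) \in T$. The key subgoal, which drives everything else, will be an auxiliary lemma saying that internal paths are transported by any rooted $\eta$-simulation: if $(y,z) \in T$ and $y \Rightarrow y'$, then there exists $z'$ with $z \Rightarrow z'$ and $(y',z') \in T$. This is proved by a straightforward induction on the length of the internal path $y \Rightarrow y'$, iterating clause (a) of Definition \ref{df:sim}.

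Given that lemma, the verification of the four defining clauses for $U$ is mostly bookkeeping. For the base-point clause, $(i_P, i_Q) \in S$ and $(i_Q, i_R) \in T$ yields $(i_P, i_R) \in U$. For rootedness, if $(i_P, z) \in U$ via some intermediate $y$, rootedness of $S$ forces $y = i_Q$, and then rootedness of $T$ forces $z = i_R$. For the simulation clause, take $(x,z) \in U$ via $y$ and a transition $x \trans{a} x'$: in the internal case, apply $S$ to get $y \Rightarrow y'$ with $(x',y') \in S$, then apply the auxiliary lemma to $(y,z) \in T$ to get $z \Rightarrow z'$ with $(y',z') \in T$, so $(x',z') \in U$. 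In the external case, $S$ gives $y \Rightarrow y_1 \trans{a} y'$ with $(x,y_1), (x',y') \in S$; the lemma turns $y \Rightarrow y_1$ into $z \Rightarrow z_1$ with $(y_1, z_1) \in T$, and clause (b) applied to $y_1 \trans{a} y'$ yields $z_1 \Rightarrow z_2 \trans{a} z'$ with $(y_1, z_2), (y', z') \in T$; concatenating internal segments gives $z \Rightarrow z_2 \trans{a} z'$ with $(x, z_2) \in U$ and $(x', z') \in U$. The final-state clause is immediate from the two given simulations.

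For (2), reflexivity amounts to checking that the identity relation $\mathrm{id}_P$ on any $P \in \aut$ is a rooted $\eta$-simulation: the base-point and rootedness conditions are trivial, and if $x \trans{a} x'$ then either $x \Rightarrow x'$ (when $a$ is internal, using the one-step internal path) or $x \Rightarrow x \trans{a} x'$ (when $a$ is external, using the empty internal path), both witnesses being in $\mathrm{id}_P$; preservation of final states is trivial. Transitivity is exactly part (1). Hence $\leq$ is a preorder on $\aut$.

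The only real obstacle is the internal-path transport lemma, and in particular being careful that the induction step uses clause (a) of Definition \ref{df:sim} (which only requires matching by $\Rightarrow$, not by a single transition). Once that lemma is in hand, the rest of the argument is a routine diagram chase, and there is no subtlety from the rootedness condition beyond the single observation above that rootedness composes.
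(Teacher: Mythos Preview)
Your proposal is correct and follows essentially the same route as the paper's proof: the paper also composes $S$ and $T$ and checks the four clauses, implicitly invoking your internal-path transport lemma when it writes ``since $y\Rightarrow y'$ consists of a sequence of finite internal transitions, there exists $z'$ such that $(y',z')\in T$ and $z\Rightarrow z'$,'' and it likewise derives part (2) from reflexivity of the identity relation plus part (1). Your only addition is to make that transport lemma and its inductive justification explicit, which is a presentational refinement rather than a different argument.
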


\begin{proof}
\begin{enumerate}
\item Let $S:P\rightarrow Q$ and $T:Q\rightarrow R$ be simulations and let us show that $ST:P\rightarrow R$ is a simulation.
\begin{itemize}
\item[--] Evidently, $(i,i)\in ST$. 
\item[--] Let $(x,z)\in S T$ and $x\trans{a} x'$. By definition of the relational composition there exists $y\in Q$ such that $(x,y)\in S$ and $(y,z)\in T$.
\begin{itemize}
\item[a)] if $a$ is internal,  there exists $y'\in Q$ such that $y\Rightarrow y'$ and $(x',y')\in S$. Since $y\Rightarrow y'$ consists of a sequence of finite internal transition, there exists $z'\in T$ such that $(y',z')\in T$ and $z\Rightarrow z'$. Hence $(x',z')\in ST$ and $z\Rightarrow z'$.
\item[b)] If $a$ is external, there exists $y_1,y'\in Q$ such that $y\Rightarrow y_1\trans{a} y'$ and $(x,y_1)\in S$ and $(x',y')\in S$. Since $(y,z)\in T$ and $y\Rightarrow y_1$, there exists $z_1\in R$ such that $(y_1,z_1)\in T$ and $z\Rightarrow z_1$. Again, since $T$ is a simulation and $y_1\trans{a} y'$, there exists $z_2,z'\in R$ such that $z_1\Rightarrow z_2\trans{a} z'$ and $(y_1,z_2)\in T$ and $(y',z')\in T$. Hence, by transitivity of $\Rightarrow$, we have $z\Rightarrow z_2\trans{a} z'$ and $(x,z_2)\in ST$ and $(x',z')\in ST$.
\end{itemize}
\item[--] Let $(x,z)\in ST$ and $x\in F_P$, there exists $y\in Q$ such that $(x,y)\in S$ and $(y,z)\in T$. So $y\in F_Q$ and hence $z\in F_R$.
\item[--] Let $(i,z)\in ST$, there exists $y\in Q$ such that $(i,y)\in S$ and $(y,z)\in T$. So $y=i$ and hence $z=i$.
\end{itemize}
\item For reflexivity, the identity relation is a rooted $\eta$-simulation and transitivity follows from 1.
\end{enumerate}
\end{proof}

\begin{proposition}\label{apro:sim-congruence}
The equivalence relation $\equiv$ is a congruence with respect to $+$ and $P\leq Q$ iff $P + Q\equiv Q$.
\end{proposition}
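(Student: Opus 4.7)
My plan is to first establish that $+$ is monotone with respect to $\leq$, from which congruence of $\equiv$ follows immediately, and then to prove the biconditional by combining monotonicity with the canonical simulations $P\leq P+Q$ and $Q\leq P+Q$.

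For monotonicity, assume rooted $\eta$-simulations $S:P\to P'$ and $T:Q\to Q'$. I would define a relation $U:P+Q\to P'+Q'$ by mapping the shared initial state $i_{P+Q}$ to $i_{P'+Q'}$ and, on the non-initial states of $P$ and $Q$, agreeing with $S$ and $T$ respectively. By the initiality condition no transition returns to an initial state, so the states of $P+Q$ decompose cleanly into the root and non-initial copies of $P$- and $Q$-states, and transitions from $i_{P+Q}$ are precisely the union of transitions from $i_P$ in $P$ and from $i_Q$ in $Q$. Verifying the three simulation clauses for $U$ then reduces to the corresponding clauses for $S$ and $T$, except at the root: a transition $i_{P+Q}\trans{a} x'$ originates from $P$ or from $Q$, so either $S$ or $T$ supplies the matching transition from $i_{P'}$ or $i_{Q'}$, both of which are transitions from $i_{P'+Q'}$ in $P'+Q'$. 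Rootedness of $S$ and $T$ ensures that when $a$ is external the intermediate state $y_1$ with $i_{P'+Q'}\Rightarrow y_1 \trans{a} y'$ is again initial, so the combined relation remains rooted. Congruence of $\equiv$ with respect to $+$ is then immediate by applying monotonicity to the simulations witnessing $P\equiv P'$ and $Q\equiv Q'$ in both directions.

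For the biconditional, the canonical ``inclusions'' $P\hookrightarrow P+Q$ and $Q\hookrightarrow P+Q$ sending the original initial state to $i_{P+Q}$ and acting as the identity elsewhere are rooted $\eta$-simulations, because all non-initial transitions of $P$ and $Q$ survive in $P+Q$ and the initial transitions become initial transitions of $P+Q$. If $P\leq Q$ via $S$, I would build $P+Q\leq Q$ by combining $S$ with the identity on $Q$-states, mapping $i_{P+Q}\mapsto i_Q$; together with $Q\leq P+Q$ via the inclusion, this yields $P+Q\equiv Q$. Conversely, if $P+Q\equiv Q$ then chaining $P\leq P+Q$ with $P+Q\leq Q$ and invoking transitivity of $\leq$ (Proposition \ref{apro:sim-equivalence}) gives $P\leq Q$.

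The main subtlety is the bookkeeping at the root of $P+Q$: because the $+$-construction identifies initial states, the correctness of the combined simulations at $i_{P+Q}$ relies crucially on rootedness of the components $S$ and $T$, which forces the intermediate state appearing in the external-action clause to be matched with an initial state. Away from the root, all simulation conditions are inherited pointwise from $S$, $T$, and the identity, so no further argument is needed.
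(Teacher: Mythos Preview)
Your proof is correct and follows essentially the same route as the paper: both establish monotonicity of $+$ by taking (essentially) the union $S\cup T$ and using rootedness to handle the merged initial state, and both derive the biconditional from the canonical inclusions $P\leq P+Q$, $Q\leq P+Q$ together with transitivity. The only cosmetic difference is that for the forward direction the paper factors through $P+Q\leq Q+Q\equiv Q$ (invoking idempotence of $+$), whereas you build the simulation $P+Q\to Q$ directly by combining $S$ with $id_Q$; these amount to the same construction.
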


\begin{proof}
Let $S:P\rightarrow Q$ and $S':P'\rightarrow Q'$ be routed $\eta$-simulations. We show that $S\cup S':P+P'\rightarrow Q+Q'$ is again a routed $\eta$-simulation. 
\begin{itemize}
\item[-] Since initial states are identified in the construction of $+$, we have $(i_{P+P'},i_{Q+Q'}) = (i_P,i_Q)\in S\cup S'$.
\item[-] Let $(x,y)\in S\cup S'$ and $x\trans{a} x'$ be a transition of $P$ (the case where this transition belongs to $P'$ is dealt with the exact same way). We have two cases:
\begin{enumerate}
\item if $(x,y)\in S$, then either $a$ is internal and hence $(x',y)\in S$ (so in $S\cup S'$ too) or there exists $y_1,y'\in Q$ such that $y\Rightarrow y_1\trans{a} y'$ is a path in $Q$ and $(x,y_1)\in S$ and $(x',y')\in S$. By definition of $+$, $y\Rightarrow y_1\rightarrow y'$ is again a path in $Q+Q'$ such that $(x,y_1)\in S\cup S'$ and $(x',y')\in S\cup S'$.
\item if $(x,y)\in S'$, then $x = i_P$ because $x\trans{a} x'$ is assumed to be a transition in $P$. Since the initial states are merged, $(i_{P'},y)\in S'$ and therefore $y = i_{Q'} = i_{Q+Q'} = i_Q$. Therefore $(x,y)\in S$ and we are back to Case 1. 
\end{enumerate}
\item[-] Let $(x,y)\in S\cup S'$ and $x\in F_P$ (the case $x\in F_{P'}$ is similar). We have two cases again, $(x,y)\in S$ and $y\in F_{P'}$. Or $(x,y)\in S'$ and then $x\in P\cap P'$. Hence $x=i$ and we are back to the first case again.
\item[-] $S\cup S'$ is rooted because $S$ and $S'$ are both rooted.
\end{itemize}

Now assume $P\leq Q$. Then $P + Q\leq Q + Q\equiv Q$ follows from the fact that $\leq$ is a congruence and the idempotence of $+$ in Proposition \ref{pro:pka-soundness}. Moreover, since $id_Q:Q\rightarrow Q$ is a simulation, we have $P+Q\equiv Q$. Conversely, assume $P + Q\equiv Q$, since $id_P:P\rightarrow P + Q$ is a simulation we have $P\leq Q$ by transitivity of $\leq$.
\end{proof}

\begin{proposition}\label{apro:weak-pka}
$(\aut,+,\cdot,*,0,1)$ is a weak probabilistic Kleene algebra.
\end{proposition}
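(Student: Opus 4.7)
The goal is to verify that each axiom of weak probabilistic Kleene algebra holds in $\aut$ modulo rooted $\eta$-simulation equivalence. The strategy is to go through the axioms one by one and exhibit the witnessing rooted $\eta$-simulations (one direction for the subdistributivity inequation, both directions for the equational axioms). Congruence of $\equiv$ with respect to $+$ is given by Proposition \ref{apro:sim-congruence}, reflexivity, transitivity and composition of simulations by Proposition \ref{apro:sim-equivalence}, and monotonicity of all operators by Proposition \ref{pro:elementary-consequences}; these are used freely throughout.

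The idempotent commutative monoid structure of $(\aut,+,0)$, the monoid structure of $(\aut,\cdot,1)$, left annihilation $0\cdot x \equiv 0$, and right distributivity $(x+y)z \equiv xz + yz$ are all near-structural. Because of the state identifications imposed by the constructions of Section \ref{subsec:semantic-space} (merging initial states for $+$, and gluing final states of the first factor with the initial state of the second for $\cdot$), the two automata being compared in each case share the same underlying graph. The witnessing rooted $\eta$-simulations are therefore essentially the identity relation on the shared state space, and rootedness follows from the initiality condition together with the fact that the shared initial state is fixed by those identifications. Right annihilation is explicitly excluded from the axiom list, as is justified by the observation that e.g.\ $a\cdot 0$ retains the transition $i\to_a\circ$ with non-final target and so is not simulation-equivalent to the single-state $0$.

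Subdistributivity $xy + xz \leq x(y+z)$ is witnessed by the relation that collapses the two disjoint copies of $x$ on the left onto the single copy on the right, sends both distinct post-$x$ states on the left to the merged initial state of $y+z$ on the right, and maps the $y$- and $z$-parts identically. This is a rooted $\eta$-simulation but not a bisimulation: the whole point of Equation \ref{eq:subdistributivity} is that on the right the $y$-versus-$z$ branch may depend on internal or probabilistic events inside $x$, whereas on the left it is fixed before $x$ runs.

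The main obstacle is the Kleene-star block. Left unfold $1 + x\cdot x^* \equiv x^*$ follows directly from the construction of $P^*$, in which the root of $x^*$ is final (accounting for $1$) and each completed iteration returns to a boundary state that behaves like the root of $x^*$ (accounting for $x\cdot x^*$); the witnessing simulations in both directions are again essentially the identity on the common state space. Left induction $xy \leq y \Rightarrow x^*y \leq y$ is the delicate axiom. The plan is to first derive $x^n y \leq y$ for every $n \geq 0$ by induction on $n$: the base case $x^0 y \equiv y$ uses the unit law for $\cdot$ already verified, and the inductive step combines associativity of $\cdot$, monotonicity of left multiplication by $x$, and transitivity of $\leq$ to give $x^{n+1} y = x\cdot(x^n y) \leq x\cdot y \leq y$. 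To pass from the family $\{x^n y \leq y\}_{n\geq 0}$ to $x^* y \leq y$, I would invoke the conditional Scott continuity of sequential composition on finite automata (Proposition \ref{pro:mult-cont} in the appendix), which realises $x^*y$ as the join of the approximants $x^n y$ and hence reduces the target inequality to what has already been shown. The key verification is that the approximating simulations $S_n$ can be chosen compatibly so that their join is again a rooted $\eta$-simulation, with rootedness inherited from each $S_n$ sending the unique root of $x^n y$ to $i_y$.
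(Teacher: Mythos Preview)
Your treatment of the semiring axioms, subdistributivity, and left unfold is aligned with the paper: in each case the two automata being compared share essentially the same transition graph up to duplicated copies, and the identity or copy-collapsing relation is the witnessing rooted $\eta$-simulation.

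The gap is in your argument for left induction. Appealing to Proposition~\ref{pro:mult-cont} is problematic on two counts. First, that continuity result is stated and proved only for $\aut_f$, whereas the present proposition is about all of $\aut$. Second, and more seriously, continuity of $\cdot$ does not by itself realise $x^{*}y$ as $\sup_n x^{n}y$: you still need $x^{*}=\sup_n(1+x)^{n}$ in the model, and the paper's own derivation of that identity (immediately before the proof of Theorem~\ref{thm:rabin}) goes via the Kleene fixed-point characterisation, which uses that $x^{*}$ is the \emph{least} prefixed point of $y\mapsto 1+xy$ --- i.e.\ it already uses left induction. So the detour through Proposition~\ref{pro:mult-cont} is circular as written, quite apart from being proved later in the appendix than the proposition you are establishing.

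The paper's proof of left induction avoids this entirely. It builds a rooted $\eta$-simulation from $x^{*}y$ to $y$ directly, as a union of the simulations obtained by iterating the hypothesis $xy\leq y$, invoking (from~\cite{Rab11}) that rooted $\eta$-simulations are closed under union. Your final sentence --- that the approximating $S_n$ can be chosen compatibly so that their join is again a rooted $\eta$-simulation --- is not a side verification after the continuity step: it \emph{is} the entire argument, carried out on the state space of $x^{*}y$ itself, and once it is done neither continuity nor suprema are needed.
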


\begin{proof}
Associativity and commutativity of $+$ and $0+x = x$ follows easily from the fact that $+$ is base on $\cup$. 

\begin{itemize}
\item Idempotence of $+$: since the union is made disjoint, we assume $P_c$ is a copy of $P$ where every states is indexed by $c$. Then $id_P:P\rightarrow P + P_c$ is a simulation and $\{(y,x)\ |\ y = x \textrm{ or } y = x_c\}$ is a simulation from $P+P_c$ to $P$.
\item Associativity of $\cdot$: associativity follows from the same proof found in \cite{Rab11} because identity relations are simulation and our multiplication here is exactly the $\eps$-free version of the multiplication there. 
\item $1$ is neutral for $\cdot$: it follows easily from the construction that $1P = P$ and $P1 = P$.
\item Subdistributivity \ref{eq:subdist}: to show that $PQ + PR\leq P(Q+R)$, it suffices to show that $PR\leq P(Q+R)$ and derive the result from idempotence of $+$. Remind that $id_P$ and $id_Q$ are simulation so it suffices to show that $id_P\cup id_Q:PQ\rightarrow P(Q+R)$ is again a simulation. Obviously, $(i,i)\in S$ and it is rooted and conserves final states. Moreover $\lra_{P(Q+R)}\supseteq  \lra_{PQ}$.
Hence $id_P\cup id_Q$ is a simulation.
\item Right distributivity \ref{eq:rdist}: let $P_c$ be a disjoint copy of $P$, then the relation 
$$S = \{(x,y)\ |\ y = x\textrm{ or } y = x_c\}\nonumber$$
from $(Q+R)P$ to $QP_c + RP$
is rooted and preserves final states. Let $(x,y)\in S$ and $x\trans{a} x'\in\lra_{(Q+R)P}$. Remind that 
\begin{eqnarray}
\lra_{(Q+R)P} = \lra_Q\cup\lra_R\cup\lra_P-\{i\trans{a} z\in\lra_P\} \nonumber\\
\cup\{z\trans{a} z' \ |\ z\in F_Q\cup F_R \textrm{ and } i\trans{a} z'\in\lra_P\nonumber\}
\end{eqnarray}

If the transition belongs to the first three sets then we are done, else we can assume $x\in F_Q$ and $i\trans{a} x'\in\lra_P$ i.e. $y = x$ and $x'\in P$. We have
$$\lra_{QP_c+RP} \supseteq \{z\trans{a} z'_c \ |\ z\in F_Q \textrm{ and } i\trans{a} z'_c\in\lra_{P_c}\}$$
so $x\trans{a} x'_c\in \lra_{QP + RP}$ and $(x',x'_c)\in S$. Similarly, we can prove that if $(x,y)\in S$ and $y\trans{a} y'$ then there exists $x'$ such that $(x'y')\in S$ and $x\trans{a} x'$. Hence $S$ is a bisimulation.
\item Left unfold \ref{eq:unfold}: Let $x_*$ be a state in $P^*$ and $x$ the corresponding state in the unfolded version $(1 + PP^*)$ i.e. $x$ is considered as a state of $P$. The rooted version of relation $S = id_{P^*}\cup\{(x_*,x)\}$ is a rooted $\eta$-bisimulation from $P^*$ to $1 + PP^*$.
\item Left induction \ref{hf:linduction}: as in (10), the proof is again similar to \cite{Rab11} because rooted $\eta$-simulation are stable by union.
\end{itemize}
\end{proof}

\begin{proposition}\label{apro:parallel-algebra}
$(\aut, +,\cdot,\pr{A} ,1 )$ satisfies equations (\ref{eq:par-assoc}- \ref{eq:exchange-law}) modulo rooted $\eta$-simulation equivalence for any set of synchronisable actions $A\subseteq\Sigma$ (i.e. no probabilistic actions).
\end{proposition}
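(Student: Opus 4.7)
My plan is to verify each of the axioms \rf{eq:par-assoc}--\rf{eq:exchange-law} by exhibiting an explicit rooted $\eta$-(bi)simulation in $\aut$. By Proposition \ref{apro:sim-equivalence} these relations compose and form a preorder, so one construction per axiom suffices. Three of the five axioms are essentially symmetries of the construction: $1\pr{A} 1$ and $1$ are both the singleton automaton whose unique state is simultaneously initial and final with no transitions, yielding \rf{eq:par-1} at once; the state swap $(p,q)\mapsto (q,p)$ is an isomorphism between $P\pr{A} Q$ and $Q\pr{A} P$ because $A$ is used symmetrically for both interleaved and synchronised transitions, giving \rf{eq:par-comm}; and the canonical reassociation bijection $(p,(q,r))\leftrightarrow ((p,q),r)$ restricts to a bisimulation on the reachable parts of the triple Cartesian product, giving \rf{eq:par-assoc}. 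Associativity crucially uses that both occurrences of $\pr{A}$ share the same frame, so that an action in $A$ forces participation from all three components in both expressions; the mixed-frame counterexample in the main text shows this is necessary.

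For monotonicity \rf{eq:par-dist}, I send the merged initial state of $(P\pr{A} Q)+(P\pr{A} R)$ to the initial state $(i_P,i_{Q+R})$ of $P\pr{A}(Q+R)$, and each subsequent state $(p,q)$ of the $P\pr{A} Q$ summand (resp.\ $(p,r)$ of the $P\pr{A} R$ summand) to $(p,q)$ (resp.\ $(p,r)$) on the right, using that the $+$-construction identifies $i_Q$ with $i_R$ so that $(p,i_Q)$ and $(p,i_R)$ on the left collapse to a common image $(p,i_{Q+R})$ on the right. A short case split on whether a transition is interleaved from $P$, interleaved from the summand, or synchronised shows the relation is an $\eta$-simulation, and rootedness holds because only the initial state has the initial state as image. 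Strictness is expected since $(p,i_{Q+R})$ can still branch into either $Q$ or $R$, whereas on the left the choice is already committed at $(p,i_Q)$ or $(p,i_R)$.

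The interchange law \rf{eq:exchange-law} is the main obstacle. The left-hand automaton $(P\pr{A} Q)(U\pr{A} V)$ has a strict two-phase structure: paths first traverse $P\pr{A} Q$ to some final state $(f_P,f_Q)$, at which point the sequential composition attaches a fresh copy of $U\pr{A} V$ whose initial state $(i_U,i_V)$ is identified with $(f_P,f_Q)$, and the second phase then unfolds within that copy. On the right $(PU)\pr{A}(QV)$, the sequential compositions identify $i_U$ with each $f_P$ in $PU$ and $i_V$ with each $f_Q$ in $QV$, and the product then permits a $U$-transition as soon as the $P$-component is locally done, even while $Q$ is still running. I define the simulation by pairing each first-phase state $(p,q)$ on the left with $(p,q)$ on the right, the merge state of each copy (identified with both $(f_P,f_Q)$ and $(i_U,i_V)$) with $(f_P,f_Q)=(i_U,i_V)$ on the right, and each second-phase state $(u,v)$ living in the copy attached to some $(f_P,f_Q)$ with $(u,v)$ on the right, so that many left states collapse to a common right state. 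Verification then proceeds by case analysis on the four possible interleaved actions, on synchronised $A$-actions in either phase, and on the crossover (which is an identification, not a real transition, on both sides), together with the observation that final states of the left, of the form $(f_U,f_V)$ inside some copy, coincide with the final states $F_{PU}\times F_{QV}$ on the right. The inequality is strict in general because the right admits traces interleaving an early $U$-action with a late $Q$-action, which the left forbids.
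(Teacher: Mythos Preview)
Your treatment of \rf{eq:par-assoc}--\rf{eq:par-dist} is essentially the paper's: the swap bijection for commutativity, a case split on $a\in A$ versus $a\notin A$ showing that the reassociation bijection preserves transitions (exploiting the single fixed frame), and a copy-collapsing map for monotonicity (the paper writes the two summands as $P\pr{A} Q$ and $P_c\pr{A} R$ with $P_c$ a disjoint copy of $P$, but the relation is the one you describe).

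For the interchange law \rf{eq:exchange-law} your simulation has a genuine gap. First a minor point: in the paper's sequential composition $PQ$ there is a \emph{single} shared copy of $Q$, with $i_Q$ removed and each $f\in F_P$ acquiring the outgoing transitions of $i_Q$; there are not separate copies of $U\pr{A} V$ indexed by $(f_P,f_Q)$ as you assume. The substantive issue is the asymmetric second-phase states. Immediately after the merge, a non-synchronised $U$-step on the left reaches a state $(u,i_V)$ with $u\neq i_U$. Your rule sends it to ``$(u,i_V)$ on the right'', but no such state exists in $(PU)\pr{A}(QV)$: the $QV$-component has state set $Q\cup(V\setminus\{i_V\})$, so while the first coordinate has already entered $U$, the second coordinate is still some concrete $f_Q\in F_Q$. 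The paper handles exactly this by enlarging the identity relation with pairs
\[
\bigl((u,i_V),(u,y)\bigr)\ \text{for }y\in F_Q,\qquad
\bigl((i_U,v),(x,v)\bigr)\ \text{for }x\in F_P,
\]
so that the simulation is one-to-many from left to right, the opposite of the many-to-one collapse you describe. Without these extra pairs the transfer condition already fails at the first interleaved step out of the merge state; once they are added, the remaining verification (your four interleaved cases, the synchronised case, and preservation of final states) proceeds along the lines you sketch.
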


\begin{proof}
$1\pr{A} 1 = 1 $ follows directly from the definition of $\pr{A} $ and the simulation used for the commutativity is $\{((x,y),(y,x))\ |\ x\in P\textrm{ and } y\in Q\}$.
\begin{itemize}
\item[(\ref{eq:par-assoc})] For associativity, we show that if $(x,(y,z))\trans{a} (x',(y',z'))\in\lra_{P\pr{A} (Q\pr{A} R)}$ then $((x,y),z)\trans{a} ((x',y'),z')\in\lra_{(P\pr{A} Q)\pr{A} R}$.
\begin{itemize}
\item If $a\notin A$, then
\begin{itemize}
\item $x\trans{a} x'$ and $y = y', z = z'$. So $(x,y)\trans{a} (x',y)\in \lra_{P\pr{A} Q}$ and hence $((x,y),z)\trans{a} ((x',y),z)\in \lra_{(P\pr{A} Q)\pr{A} R}$ because $a\notin A$.

\item or $x = x'$ and $(y,z)\trans{a} (y',z')$. Since $a\notin A$:
	\begin{itemize}
	\item $y\trans{a} y'$ and $z = z'$, hence $((x,y),z)\trans{a} ((x,y'),z)\in \lra_{(P\pr{A} Q)\pr{A} R}$,
	\item or $y = y'$ and $z\trans{a} z'$ and hence $((x,y),z)\trans{a} ((x,y),z')\in \lra_{(P\pr{A} Q)\pr{A} R}$.
	\end{itemize}
\end{itemize}
\item If $a\in A$, then $x\trans{a} x'$ and $(y,z)\trans{a} (y',z')$. Since $a$ is again synchronised in $Q\pr{A} R$, 
$y\trans{a} y'$ and $z\trans{a} z'$. So $(x,y)\trans{a} (x',y')\in \lra_{P\pr{A} Q}$ and hence $((x,y),z)\trans{a} ((x',y'),z')\in \lra_{(P\pr{A} Q)\pr{A} R}$.
\end{itemize}
Since $\pr{A} $ is commutative, we deduce that $\lra_{(P\pr{A} Q)\pr{A} R} = \lra_{P\pr{A} (Q\pr{A} R)}$ so the identity relations could again be used for the simulation.
\item[(\ref{eq:par-dist})] To prove monotonicity, we consider the relation $S:P \pr{A} Q + {P_c} \pr{A} R\to P\pr{A} (Q + R)$ as in the case of multiplication i.e. $S = \{((x,y),(x,y)),((x_c,y),(x,y))\ |\ x\in P\wedge y\in Q\cup R\}$ and $x_c$ is the copy of the state $x\in P$ in $P_C$. Let $((x_c,y),(x,y))\in S$ (the case $((x,y),(x,y))\in S$ is easier and can be handled in the same way) and $(x_c,y)\trans{a} (x'_c,y')\in\lra_{P\pr{A} Q+P\pr{A} R}$. By definition of $+$, that transition belongs to $\lra_{P\pr{A} Q}$ or $\lra_{P\pr{A} R}$. Since the first component is a copy of $x$, we have $(x_c,y)\trans{a} (x'_c,y')\in\lra_{P\pr{A} R}$ that is $y,y'\in R$.
\begin{itemize}
\item if $a\notin A$, then 
	\begin{itemize}
	\item $x_c\trans{a} x'_c$ and $y = y'$, so $x\trans{a} x'\in\lra_{P}$ and hence $(x,y)\trans{a} (x',y)\in\lra_{P\pr{A} (Q+R)}$ and $((x'_c,y),(x',y))\in S$ by definition of $S$.
	\item or $x_c= x'_c$ and $y\trans{a} y'$, so $(x,y)\trans{a} (x,y')\in\lra_{P\pr{A} (Q+R)}$ and $((x,y'),(x_c,y'))\in S$.
	\end{itemize}
\item if $a\in A$, then $x_c\trans{a} x'_c$ and $y\trans{a} y\in\lra_{R}'$. So $x\trans{a} x'\in\lra_{P}$ and hence $(x,y)\trans{a} (x',y')\in\lra_{P\pr{A} (Q+R)}$ and $((x'_c,y'),(x',y'))\in S$.
\end{itemize}
\item[(\ref{eq:exchange-law})] Firstly notice that the set of states of $(P\|Q)(P'\|Q')$ (where the frame $A$ of the concurrency operator is left implicit) is a subset of $(P\times Q)\cup(P'\times Q')$ which is in turn a subset of $(P\cup P')\times (Q\cup Q')$. Hence we consider the injection $id$ of the former set to the later one and the relation defined in Figure \ref{fig:sim-exchangelaw}
\begin{figure*}\label{fig:sim-exchangelaw}
\begin{eqnarray}
S = id & \cup & \{((x',i),(x',y))\ |\ y\in F_Q\textrm{ and } i\trans{a} x'\in \lra_{P'} \textrm{ and } i\in Q'\}\nonumber\\
	& \cup &\{((i,y'),(x,y'))\ |\ x\in F_P\textrm{ and } i\trans{a} y'\in \lra_{Q'} \textrm{ and } i\in P'\}\nonumber
\end{eqnarray}
\caption{Construction of the simulation to prove the interchange law.}
\end{figure*}
We show that $S$ is a simulation in our sense.
\begin{itemize}
\item[-] Since $(i,i) = i$ is related to itself. In particular, $S$ is rooted because $x'\neq i$ in the second set in the definition of $S$ (resp. for the third set) and HCI.
\item[-] Let $(x,y)\in (P\|Q)(P'\|Q')$ such that $(x,y)\trans{a} (x',y')$. We have the following cases.
\begin{itemize}
\item The transition is in $\lra_{P\|Q}$, in which case $(x,y),(x',y')\in P\times Q$. 
\begin{itemize}
\item if $a\notin A$ then $x\trans{a} x'\in\lra_P$ and $y = y'$ or $x = x'$ and $y\trans{a} y'\in\lra_Q$. By definition of the sequential composition again, these transitions belong to $\lra_{PP'}$ or $\lra_{QQ'}$ respectively. Hence $(x,y)\trans{a} (x',y')\in\lra_{PP'\|QQ'}$.
\item if $a\in A$ then $x\trans{a} x'\in\lra_P$ and $y\trans{a} y'\in\lra_Q$. As in the previous case, the considered transition exists in $PP'\|QQ'$
\end{itemize}
\item The transition is in $\lra_{P'\|Q'-\{(i,i)\}}$. This case is similar to the previous one because because $x\neq i$ and $y\neq i$ as states of $P'$ and $Q'$ respectively.
\item It is a linking transition i.e. $(x,y)\in F_{P\|Q}$ and $(i,i)\trans{a} (x',y')\in\lra_{P'\|Q'}$. Then $x\in F_P$ and $y\in F_Q$ and we have two cases:
\begin{itemize}
\item if $a\notin A$, then $i\trans{a} x'\in\lra_{P'}$ and $y' = i$ or $x' = i$ and $i\trans{a} y'\in\lra_{Q'}$. In the first case, the definition of $S$ implies that $((x',y'),(x',y))\in S$ and since $a\notin A$, we have $(x,y)\trans{a} (x',y)\in\lra_{PP'\|QQ'}$. Similarly for the other case.
\item if $a\in A$, then $i\trans{a} x'\in\lra_{P'}$ and $i\trans{a} y'\in\lra_{Q'}$. Then $x\trans{a} x'\in\lra_{PP'}$ and $y\trans{a} y'\in\lra_{QQ'}$. Hence $(x,y)\trans{a} (x',y') \in\lra_{PP'\|QQ'}$.
\end{itemize}
\item Let $((x',i),(x',y))\in S$ as in the above definition of $S$ and $(x',i)\trans{a} (x'',y'')\in\lra_{P'\|Q'}$.
\begin{itemize}
\item If $a\notin A$, then $x'\trans{a} x''\in\lra_{P'}$ and $y''=i$ or $x' = x''$ and $i\trans{a} y''\in\lra_{Q'}$. In the first case, $(x',y)\trans{a} (x'',y)\in\lra_{PP'\|QQ'}$ because $a\notin A$ and $(x'',y)\in S$ because $y'' = i$. In the second case, $y\trans{a} y''\in\lra_{QQ'}$ and hence $(x',y)\trans{a} (x'',y'')\in\lra_{PP'\|QQ'}$ because $x' = x''$.
\item If $a\in A$, then $x'\trans{a} x''\in\lra_{P'}$ and $i\trans{a} y''\in\lra_{Q'}$. By definition of sequential composition, $y\trans{a} y''\in\lra{QQ'}$ and since $a\in A$, $(x',y)\trans{a} (x'',y'')\in\lra_{PP'\|QQ'}$.
\end{itemize}
\item The case $((i,y'),(x,y'))\in S$ is similar.
\end{itemize}
\item[-] Let $((x,y),(u,v))\in S$ such that $(x,y)\in F_{(P\|Q)(P'\|Q')}$. That is, 
$$x\in (F_{P'}-\{i\})\cup o_{P'\|Q'}((i,i))F_{P}\subseteq (F_{P'}-\{i\})\cup o_{P'}(i)F_{P}$$
and similarly for $y$. Hence, if tuple belongs to $id$ then we are done. Assume $i\trans{a} x\in\lra_{P'}$ and $y = i$ (the other case is proved in exactly the same way), then $u = x$ and $v\in F_Q$. Since $(x,i)$ is a final state, we have $F_{QQ'} = (F_{Q'}-\{i\})\cup F_{Q}$ and $x'\in F_{P'}-\{i\}$. Hence $(u,v)\in F_{PP'}\times F_{QQ'}$.
\end{itemize}
Finally, since simulation preserves reachability, the reachable part of $(P\|Q)(P'\|Q')$ is simulated by the reachable part of $PP'\|QQ'$.
\end{itemize}
\end{proof}

\begin{proposition}\label{pro:mult-cont}
The sequential composition is (conditionally) continuous from the left and the right in $\aut_f$. That is, if $(P_i)_i$ is a $\leq$-directed set of finite automata with limit $P$ then $\sup_i P_iK = PK$ and $\sup_i KP_i = KP$.
\end{proposition}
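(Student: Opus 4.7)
The plan is to establish each equality by proving the two simulation inequalities separately. The inclusions $\sup_i P_i K \leq PK$ and $\sup_i KP_i \leq KP$ are immediate from monotonicity of $\cdot$ (Proposition~\ref{pro:elementary-consequences}): since $P_i \leq P$ for every $i$, monotonicity gives $P_i K \leq PK$ and $K P_i \leq K P$, so the suprema, being least upper bounds in the simulation preorder, are bounded above by $PK$ and $KP$ respectively.

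For the reverse inclusion $PK \leq \sup_i P_i K$ I would build a rooted $\eta$-simulation directly. Recall that $PK$ is obtained by identifying a fresh copy of the initial state of $K$ with every final state of $P$, while the directed supremum $P = \sup_i P_i$ realises every state and transition of $P$ as inherited from some $P_j$ of the directed family, via the simulations linking them. Exploiting this, I would define $R: PK \to \sup_i P_i K$ by sending a state $x$ of the $P$-part to a chosen witness $x_j$ of $x$ inside some $P_j K$, and a state of a $K$-copy glued at a final state $f$ of $P$ to the corresponding state of the $K$-copy glued at a final witness $f_j \in P_j$ sitting inside the supremum. Because final states of $P$ descend to final states in some $P_j$, and transitions of $P$ descend to transitions within the $P_j$'s, the clauses of Definition~\ref{df:sim} can be verified by a case analysis on whether the transition under consideration lies in the $P$-part, in one of the $K$-copies, or crosses the $P$-to-$K$ boundary; the root condition is automatic since the initial states are preserved by the constructions of both sequential composition and of the directed supremum.

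The argument for $\sup_i KP_i = KP$ is symmetric. Here $K$ appears as the fixed prefix whose finitely many final states are decorated with copies of $P$ (respectively $P_i$), so the simulation is built by composing the identity on $K$'s state space with the directed-sup simulation on the appended $P$-part. Combining the two inequalities yields $\sup_i P_i K \equiv PK$ and $\sup_i K P_i \equiv K P$ as required.

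The main obstacle is to reconcile the colimit presentation of $\sup_i P_i K$ with the gluing construction of the sequential composition $PK$. One must verify that ``$f$ is final in $P$'' lifts correctly along the directed family to a witness ``$f_j$ is final in $P_j$'', and that transitions crossing the $P$-to-$K$ boundary in $PK$ are matched by transitions of the same kind in $P_j K$ for an appropriately chosen $j$. Because the simulation clauses of Definition~\ref{df:sim} are local (each concerns a single pair $(x,y) \in R$ and one transition from $x$), a single index $j$ suffices per clause, and directedness of the family allows any finite combination of required witnesses to be gathered inside a single $P_j$; this is precisely where the finite-automaton hypothesis on the $P_i$'s becomes essential.
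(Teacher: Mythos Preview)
Your argument rests on a misconception about what $\sup_i P_iK$ means. The order $\leq$ on $\aut_f$ is merely the simulation preorder; a supremum of $(P_iK)_i$ is by definition any automaton $R$ that is a least upper bound, and a priori we do not know one exists, let alone have a concrete description of its states. You treat $\sup_i P_iK$ as though it were a colimit whose states are inherited from the $P_jK$, and likewise claim that ``the directed supremum $P=\sup_i P_i$ realises every state and transition of $P$ as inherited from some $P_j$.'' Neither is true: $P$ is simply a given finite automaton that happens to satisfy the universal property of a least upper bound; its state space bears no structural relation to those of the $P_j$. Consequently there is no target automaton against which your simulation $R:PK\to\sup_iP_iK$ can be defined, and the whole construction collapses.

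The correct shape of the argument is the one the paper takes: to show $PK=\sup_iP_iK$ one must prove that $PK$ is itself the least upper bound, i.e.\ that whenever $P_iK\leq R$ for all $i$ then $PK\leq R$. For the right-hand factor the paper does this via a Galois connection, constructing a residual $R/K$ with $P_iK\leq R\iff P_i\leq R/K$; then $P=\sup_iP_i\leq R/K$ gives $PK\leq R$. For the left-hand factor no adjoint is available, so the paper argues directly: from each simulation $KP_i\to R$ one extracts the (finite) set $X_i\subseteq R$ of states from which $P_i$ is simulated, uses finiteness of $R$ to find a single $X$ that works for cofinally many $i$, and patches the simulations together. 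The finiteness hypothesis is used on the \emph{target} $R$, not on the $P_i$'s as you suggest.
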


We denote $\aut_f$ the set of finite automata satisfying the reachability and initiality conditions. $\aut_f$ is a subalgebra of $\aut$. The proof is similar to our proof in~\cite{Rab11}. The only difference is from the manipulation of $0$ (because we do not have $x0 = 0$ in this setting) and hence Proposition~\ref{pro:mult-cont} is a generalised version of the continuity in~\cite{Rab11}.

\begin{proof}
We first define a notion of residuation on $\aut_f$. For automata $P$ and $Q$ we define the automaton $P/Q$ with initial state $i_{P/Q} = i_P$, final states $F_{P/Q} = \{x\in P\ |\ Q\leq P_x\}$, where $P_x$ is constructed from $P$ by making its initial state into $x$. We make the resulting automaton reachable by discarding all states not reachable from $x$. Notice that $P_x$ does not necessarily satisfy HCI. In this case, we unfold each transition from $x$ once and isolate $x$ but keeping a disjoint copy of it to make sure that the resulting automata is bisimulation equivalent to the non-rooted version. 

We now show that $RQ\leq P\textrm{ iff } R\leq P/Q$.  Assume $S$ is a simulation from $RQ$ to $P$. That means $S$ generates a simulation from $Q$ to $P_x$ for some $x$. It follows from the definition of $P/Q$ that $S$ generates a simulation from $R$ to $P/Q$, since the state $x$ become final state of $G/H$ and they are images of the final states of $K$ under the simulation generated by $S$.

For the converse direction, suppose that $S$ is a simulation from $R$ to $P/Q$. By Theorem \ref{pro:pka-soundness}, multiplication is isotone, hence $RQ\leq (P/Q)Q$, and it remains to show that $(P/Q)Q\leq P$.

First, if $F_{P/Q}$ is empty and then $R$ has no final state either and $RQ = R$ by definition of sequential composition. Hence $RQ = R\leq P/Q\leq P$.

Assume $F_{P/Q}$ is not empty and let $S'$ be a simulation from $RQ$ to $(P/Q)Q$. By construction of $P/Q$, we know that there exists a simulation $S_x$ from $Q$ to $P_x$ for all final states $x\in F_{P/Q}$. Moreover, there is a relation $T:P/Q\rightarrow P$ satisfying all properties of simulation except the final state property, namely a restriction of the identity relation $id_P$. We can show that $T' =  (\cup_x S_x)\cup T$ is indeed a simulation from $(P/Q)Q$ to $P$ and $S'\circ T'$ is a simulation from $RQ$ to $P$.

It then follows from general properties of Galois connections that $(\cdot H)$ is (conditionally) completely additive, hence right continuous.

It remains to show left continuity. Let $(Q_i)_i$ be a directed set of automata such that $\sup_iQ_i = Q$ and let $P$ be any automaton. Then $\sup_i(PQ_i) \leq PQ$ because multiplication is monotone and it remains to show $PQ \leq \sup_i(PQ_i)$. Let us assume that $\sup_i(PQ_i)\leq R$. We will show that $PQ\leq R$.

By definition of supremum, $PQ_i\leq R$ for all $i$, hence there is a set of states $X_i = \{x\in R\ |\ Q_i\leq R_x\}$, that is, the set of all those states in $R$ from which $Q_i$ is simulated. Obviously, $X_i\subseteq X_j$ if $Q_j\leq Q_i$ in the directed collection.  But since $R\in\aut_f$ has only finitely many states, there must be a minimal set $X$ in that directed set such that all $Q_i$ are simulated by $R_x$ for some $x\in X$.  Therefore $Q = \sup_iQ_i\leq R_x$ for all $x\in X$.  There exists a simulation $S_X: PQ_i\rightarrow R$ for some $i$ such that the residual automaton $R/Q_i$ has precisely $X$ as its set of final states.  We can thus take the union of $S_X$ restricted to $P$ with all
simulations yielding $Q\leq R_x$ for all $x\in X$ and verify that
this is indeed a simulation of $PQ$ to $R$.
\end{proof}

We denote $L(P) = \{t \ |\ t \textrm{ is a tree and } t\leq P\}$ the tree language associated to $P$. We have 

\begin{lemma}\label{lem:tree-lang}
$P\leq Q$ iff $L(P)\subseteq L(Q)$.
\end{lemma}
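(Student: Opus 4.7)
The forward implication is immediate from transitivity of rooted $\eta$-simulation (Proposition \ref{apro:sim-equivalence}): any tree $t\in L(P)$ satisfies $t\leq P\leq Q$, hence $t\in L(Q)$.

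For the converse, assume $L(P)\subseteq L(Q)$. The plan is to construct a rooted $\eta$-simulation from $P$ to $Q$ by reading off pairs from the tree languages of sub-automata. For a state $x$ of $P$, write $P_x$ for the automaton obtained by re-rooting $P$ at $x$ (with the minor adjustment needed to restore initiality). Define
\[
S \;=\; \{(i_P,i_Q)\}\;\cup\;\bigl\{(x,y)\in(P\setminus\{i_P\})\times Q : L(P_x)\subseteq L(Q_y)\bigr\}.
\]
The singling out of $(i_P,i_Q)$ delivers rootedness automatically; the hypothesis $L(P)\subseteq L(Q)$ supplies this initial pair; and finality is immediate because $x\in F_P$ forces the one-state tree $1$ to lie in $L(P_x)\subseteq L(Q_y)$, hence $y\in F_Q$.

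The heart of the proof is the transfer property, which I would establish through a finite intersection argument resting on two facts: each tree language $L(P_x)$ is closed under $+$ (congruence from Proposition \ref{apro:sim-congruence} combined with idempotence of addition), and the set of candidate witness states reachable by $\tau$-sequences from $y$ in a finite automaton $Q$ is itself finite. For the internal case $x\trans{\tau}x'$, each $t\in L(P_{x'})$ contributes $\tau\cdot t\in L(P_x)\subseteq L(Q_y)$, yielding some $y^t$ with $y\Rightarrow y^t$ and $t\leq Q_{y^t}$; directedness under $+$ together with finiteness of $\{y':y\Rightarrow y'\}$ produces a uniform $y'$ satisfying all these conditions simultaneously, so $(x',y')\in S$. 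For the external case $x\trans{a}x'$, each $t\in L(P_{x'})$ contributes $a\cdot t\in L(P_x)\subseteq L(Q_y)$; rootedness of the witnessing simulation from $a\cdot t$ forces the $a$-edge at the root to be matched by a direct transition $y\trans{a}y'^t$ with $t\leq Q_{y'^t}$, and the same compactness argument yields a uniform $y'$. Since $(x,y)$ itself belongs to $S$ we may legally choose $y_1=y$ to complete the required sequence $y\Rightarrow y_1\trans{a}y'$.

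The main obstacle is precisely this uniform witness selection. A priori, different trees in $L(P_{x'})$ may require different witnesses in $Q$, so the simulation cannot be obtained pointwise and must be assembled via the finite intersection property. This is where directedness of the tree language under the semiring sum and the finite branching of $Q$ both enter the argument in an essential way; any generalisation beyond finite automata would demand a more delicate compactness step.
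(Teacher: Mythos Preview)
Your proposal is correct and follows essentially the same strategy as the paper: define the candidate simulation by $(x,y)\in S$ iff $L(P_x)\subseteq L(Q_y)$ (rootified at the initial pair), and verify the transfer clause via a finite-sum/compactness argument exploiting that $Q$ has only finitely many states.

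The one noteworthy divergence is in the \emph{internal} case. The paper dispatches it in one line, claiming directly that $(x',y)\in S$ because ``for any tree $t$, $at\leq t$'' when $a$ is internal. You instead run the same finite-intersection argument as in the external case, ranging over the finitely many $y'$ with $y\Rightarrow y'$. Your route is slightly heavier but arguably more transparent: the paper's shortcut would need $L(P_{x'})\subseteq L(Q_y)$, yet from $\tau t\leq Q_y$ one cannot in general conclude $t\leq Q_y$ under \emph{rooted} $\eta$-simulation (think of $Q_y$ with a single $\tau$-step before the first external action). So your uniform-witness argument is the safer way to close the clause. In the external case you additionally exploit rootedness to restrict to direct $a$-successors of $y$; the paper ranges over the larger set $\{y':y\Rightarrow y_1\trans{a}y'\}$, but the contradiction argument is the same either way.

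One technical point worth tightening in your write-up (and equally implicit in the paper): when you pass from $a\cdot t\leq Q_y$ to ``$t\leq Q_{y'}$'', the restricted relation need not itself be rooted. You should remark that one may re-root it (using initiality of $t$ and the clause-(b) witnessing at the new root) without breaking the simulation conditions; this is routine for trees but deserves a sentence.
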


A specialized version of this theorem could be found in~\cite{Rab11}. In this paper, we prove it for our rooted $\eta$-simulation.

\begin{proof}
By transitivity of simulation, we have $P\leq Q$ implies $L(P)\subseteq L(Q)$ so it suffices to show the converse.

Let $L(P)\subseteq L(Q)$ and consider the relation $S:P\rightarrow Q$ such that $(x,y)\in S$ iff $L(P_x)\subseteq L(Q_y)$, where $P_x$ is the automata constructed from $P$ with initial state $x$ as in the previous proof. We now show that the rooted version of $S$ is a simulation.
\begin{itemize}
\item Since $L(P)\subseteq L(Q)$, we have $(i_P,i_Q)\in S$.
\item Let $(x,y)\in S$ and $x\in F_P$, then $1\in L(P_x)\subseteq L(Q_y)$. Hence $y\in F_Q$.
\item Let $(x,y)\in S$, $L(P_x)\subseteq L(Q_y)$ and $x\trans{a} x'$ be a transition of $P$. There are two cases:
\begin{itemize}
\item[a)] $a$ is internal: for any tree $t$, $at\leq t$. Hence $L(P_{x'})\subseteq L(Q_y)$ i.e. $(x',y)\in S$.
\item[b)] $a$ is external: assume for a contradiction that for each $y'_i\in Q$ such that $y\Rightarrow y_1\trans{a} y'_i$, there exists $t_i\in L(P_{x'})$ such that $t_i\notin L(Q_{y'_i})$. Since $Q\in \aut_f$, there are only finitely many such $y'_i$. By definition of $\eta$-simulation, $a(\sum_i t_i)\in L(P_x)$ and from it follows from the hypothesis that $a(\sum_i t_i)\in L(Q_y)$ i.e. $a(\sum_i t_i)\leq Q_y$. It follows from the definition of $\eta$-simulation that there exists $y'_j$ such that $y\Rightarrow y_1\trans{a} y'_j$ and $\sum_i t_i\leq y'_j$ which implies $t_j\leq\sum_it_i\leq y_j$, a contradiction.
\end{itemize}
\item Making the relation $S$ rooted does not affect the well-definedness of $S$ as a simulation because the automata $P,Q$ are rooted.
\end{itemize} 
\end{proof}

\begin{proposition}\label{pro:par-continuous}
The concurrency operator $\|$ is (conditionally) continuous in $\aut_f$.
\end{proposition}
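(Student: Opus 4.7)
By the commutativity of $\|$ (Proposition \ref{apro:parallel-algebra}) and the monotonicity recorded in Proposition \ref{pro:elementary-consequences}, it suffices to establish one-sided continuity: given a $\leq$-directed family $(P_i)_i \subseteq \aut_f$ with supremum $P \in \aut_f$ and any $K \in \aut_f$, show $\sup_i(P_i \| K) = P \| K$. Monotonicity immediately gives $\sup_i(P_i \| K) \leq P \| K$, so the substantive task is the reverse inequality $P \| K \leq \sup_i(P_i \| K)$.

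The plan is to mirror the strategy of Proposition \ref{pro:mult-cont}: establish a residuation for the parallel operator, which yields right continuity for free via the resulting Galois connection, and then upgrade to left continuity using finiteness of $\aut_f$. Concretely, for fixed $K \in \aut_f$ I would construct a residual automaton $\rho_K(P)$ whose states are pairs $(x,y) \in P \times K$ recording consistent joint progress of $P$ and $K$ under the synchronisation frame $A$, whose transitions mirror the rules defining $\pr{A}$ but treat the $K$-component as predetermined, and whose final states are those pairs in $F_P \times F_K$. Because $A$ is fixed and $K$ is finite, this construction stays within $\aut_f$, and the target Galois property is $R \| K \leq P \Leftrightarrow R \leq \rho_K(P)$, from which right continuity of $(\cdot \| K)$ follows by standard adjunction arguments.

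For left continuity $P \| K \leq \sup_i(P_i \| K)$, reuse the finiteness pattern of Proposition \ref{pro:mult-cont}: if $\sup_i(P_i \| K) \leq R$, then by the adjunction $P_i \leq \rho_K(R)$ for every $i$, and since $\rho_K(R) \in \aut_f$ has finitely many states, the directed family of simulation-witnessing final-state configurations stabilises, giving $P = \sup_i P_i \leq \rho_K(R)$, and the adjunction applied once more yields $P \| K \leq R$. An alternative route uses Lemma \ref{lem:tree-lang}: any finite tree $t \leq P \| K$ decomposes along $A$ into projections $t_P \leq P$ and $t_K \leq K$ whose $\|$-composition dominates $t$; the finite tree $t_P$ admits a simulation into some $P_i$ by directedness and finiteness of its state set, so $t \leq t_P \| t_K \leq P_i \| K \leq \sup_j(P_j \| K)$, whence $L(P \| K) \subseteq L(\sup_i(P_i \| K))$.

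The principal obstacle is the construction of $\rho_K(P)$ in the presence of CSP-style synchronisation: unlike the sequential residual, it must simultaneously track which portion of $K$ is yet to be matched synchronously and which can still be interleaved asynchronously, respecting that $1 \| K \neq K$ in general. The delicate point is the selection of final states after restricting to the reachable part and reconciling this with the initiality condition of $\aut$; a residual state must count as final exactly when both components have terminated with no synchronisation obligations pending, so that reaching a final pair in $P \| K$ is in bijective correspondence with termination in the residual. Once this bookkeeping is right, the Galois property and the finiteness-based collapse of the directed family go through by the same pattern as in Proposition \ref{pro:mult-cont}.
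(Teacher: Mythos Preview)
Your primary route via a parallel residual $\rho_K$ has a genuine gap. The description you give---states in $P\times K$ with transitions ``mirroring the rules of $\pr{A}$ but treating the $K$-component as predetermined''---is essentially the product automaton $P\pr{A}K$ itself, not a right adjoint to $(-\pr{A}K)$. A residual satisfying $R\pr{A}K\leq P \Leftrightarrow R\leq\rho_K(P)$ must, at each state, record \emph{which behaviours of $R$ are still permissible} given what $K$ can and cannot synchronise on; in particular it must allow $R$ to offer synchronised actions that $K$ refuses (these just deadlock in $R\pr{A}K$, which may or may not be below $P$), and it must quantify over all interleavings of the unsynchronised part of $K$. You have not constructed this object, and it is not the product. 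Moreover, your argument structure is confused: if such a Galois connection held, $(-\pr{A}K)$ would preserve \emph{all} suprema automatically, and by commutativity you would be done---the separate ``left continuity'' step with its finiteness stabilisation is redundant, not merely analogous to the sequential case where the two arguments genuinely behave differently.

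Your alternative route via Lemma~\ref{lem:tree-lang} is in fact the paper's proof, and it is the one that works cleanly. The paper makes the decomposition step precise by establishing $L(P\pr{A}Q)=\mathord{\downarrow}\{t\pr{A}t'\mid t\in L(P),\ t'\in L(Q)\}$: every tree below $P\pr{A}Q$ is dominated by a parallel of a tree below $P$ with a tree below $Q$, obtained by unfolding $P$ and $Q$ to the depth of $t$. From this and directedness one gets $L(P\pr{A}Q)=\bigcup_i L(P\pr{A}Q_i)$ directly, and Lemma~\ref{lem:tree-lang} finishes. Your sketch of this route is correct in outline but elides exactly the point that needs argument, namely the tree decomposition $t\leq t_P\pr{A}t_K$; that is where the work is.
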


\begin{proof}
We need to show that for any $\leq$-directed sequence $(Q_i)_i\subseteq \aut_f$ such that $\sup_i Q_i = Q$, we have $\sup_i (P\| Q_i) = P\| Q$, where the frame is left implicit. 

Firstly, we show that $$L(P\| Q) = \downarrow(L(P)\| L(Q)) = \downarrow\{t\| t' \ |\ t\in L(P)\wedge t'\in L(Q)\}$$
where $\downarrow X$ is the down closure of $X$.
Since $\| $ is monotone, $t\|t'\in L(P\| Q)$. Conversely, let $t\in L(P\pr{A} Q)$. By unfolding $P$ and $Q$ up to the depth of $t$, we can find two tree $t_P,t_Q$ such that $t\leq {t_P} \| t_Q $ and hence $t\in \downarrow (L(P)\| L(Q))$.

Secondly, we have 
\begin{eqnarray}
L(P\| Q) & = & \downarrow\{ t\| t'\ |\ t\in P\wedge t'\in \cup_iL(Q_i)\}\nonumber\\
& = & \downarrow \cup_i\{t\| t' \ |\ t\in P\wedge t'\in L(Q_i)\}\nonumber\\
& = & \cup_i\downarrow\{ t \|t'\ |\ t\in P\wedge t'\in L(Q_i)\}\nonumber\\
& = & \cup_iL(P\|Q_i)\nonumber
\end{eqnarray} 
 and directedness ensures that $L(Q) = \cup_iL(Q_i)$~\footnote{$\cup_iL_i\subseteq Q$ is obvious and the converse could be proven by showing that if $t\in L(Q)$ and $t\notin L(Q_i)$ for every $i$, then there exists $Q'$ constructed from $Q$ ``minus" some part of $t$ such that $Q_i\leq Q'< Q$ for any $i$.}. Therefore, Lemma \ref{lem:tree-lang} ensures that $P\|Q = \sup_i(P\|Q_i)$.
\end{proof}

\begin{remark}\label{rem:remark-o}
The following remarks ensures the existence of an $o$ function satisfying the properties listed in Section \ref{sec:algebraic-testing}
\begin{enumerate}
\item The axiom $\tau\tau=\tau$ ensures that $o(x)\in\{0,\tau,1\}$ for any $x\in T_\Sigma$. If $o(x) = 0$ then $x$ will \textit{never terminate successfully}. If $o(x) = 1$,  then $x$ \textit{may terminate successfully} without the execution of any action~\footnote{A rigorous proof of this fact could be done by induction on the structure of $x$.}, and if $o(x) = \tau$ then $x$ \textit{may terminate successfully} after the execution of some action.
\item The interpretation of $o$ in the concrete model respects simulation. In fact, let $P,Q$ be the automata representing some terms in $T_\Sigma$ and $S:P\rightarrow Q$ be a simulation. After replacing each action in $P,Q$ by $\tau$, $S$ remains a simulation by Propriety (a) of Definition \ref{df:sim}. Therefore 
\begin{itemize}
\item[-] if $o(P) = 1$ then the initial state of $P$ is final and so is the initial state of $Q$, 
\item[-] if $o(P) = \tau$ then the initial state of $P$ leads to some final state and so is the initial state of $Q$ i.e. $1\leq o(Q)$,
\item[-] if $o(P) = 0$ then we are done,
\end{itemize}
and in all three cases $o(P)\leq o(Q)$.
Hence, it is safe to assume that $o$ is well defined on $T_\Sigma$ modulo the axioms of weak concurrent Kleene algebra. In particular, $o$ is monotonic with respect to the restriction of the natural order of the algebra on $I$.
\item The last property $o(x\|y)\leq o(x)o(y)$ is in general a strict inequality. For instance, if $a,b$ are synchronised actions then $o(a\|b) = o(0) = 0$ but $o(a)o(b) = \tau\tau = \tau$.
\end{enumerate}
\end{remark}

\begin{proposition}\label{apro:may-language}
In $\aut$, $\may{} $ reduces to language equivalence.
\end{proposition}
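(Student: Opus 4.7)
The plan is to show that $P \may{} Q$ iff $Tr(P) \subseteq Tr(Q)$, so that the may-preorder collapses on its equivalence kernel to language (trace) equivalence.

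For the easy direction, I assume $Tr(P) \subseteq Tr(Q)$ and derive $P \may{} Q$. Fix any test $t$ with $o(Q \| t) = 0$. By the interpretation of $o$ in the concrete model, $o(R \| t) = 0$ expresses that the automaton $R \| t$ has no path from the initial state to a joint final state, even after erasing $\tau$-steps. A successful run of $P \| t$ would, by the CSP-style discipline of $\pr{A}$, project onto an external trace $w \in Tr(P) \cap Tr(t)$; by hypothesis $w \in Tr(Q) \cap Tr(t)$, so the same trace witnesses success in $Q \| t$, contradicting $o(Q \| t) = 0$. Hence $o(P \| t) = 0$, and quantifying over $t$ yields $P \may{} Q$.

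For the converse direction, I assume $P \may{} Q$ and pick an arbitrary $w = a_1 \cdots a_n \in Tr(P)$. The natural witness is the linear ``trace test'' $t_w$ corresponding to the term $a_1 \cdot a_2 \cdots a_n$, whose automaton is a single successful chain and whose only accepted external word is $w$. Under the footnote convention that all external actions involved in the test are synchronised, I would establish the bridge lemma $o(R \| t_w) \neq 0 \Longleftrightarrow w \in Tr(R)$ for every $R \in \aut$. Applied to $R = P$ this gives $o(P \| t_w) \neq 0$; the contrapositive of $P \may{} Q$ then gives $o(Q \| t_w) \neq 0$; and the bridge lemma applied to $R = Q$ yields $w \in Tr(Q)$, as required.

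The main obstacle is the bridge lemma relating $o(R \| t_w)$ to membership of $w$ in $Tr(R)$. Its proof rests on three ingredients: (a) the CSP-style frame of $\pr{A}$ forces $t_w$ to lockstep with every external action of $R$, so that any joint final state is reached along a path whose external projection equals exactly $w$; (b) Remark \ref{rem:remark-o} guarantees that $o$ is well-defined on $T_\Sigma$ modulo rooted $\eta$-simulation and that $o(s) \neq 0$ detects the existence of a successful path modulo internal steps; and (c) the convention from Section \ref{sec:concrete-model} that probabilistic actions such as $\flip_p$ are external but never synchronised, so they interleave freely with $t_w$ without disturbing the accounting of external traces. Once this lemma is in hand, both inclusions reduce to the contrapositive manipulations above.
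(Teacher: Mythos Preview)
Your proposal is correct and follows essentially the same line as the paper's proof: both directions hinge on using a trace itself as the discriminating test, and on the observation that success of $R\|t$ is governed entirely by the intersection of the accepted traces. The paper packages this via the single CSP-style identity $Tr(x\|y) = Tr(x)\cap Tr(y)$ (with the convention that non-synchronised external actions---in particular probabilistic ones---are collapsed to $\tau$ in the trace set), from which both implications fall out in one line each; your ``bridge lemma'' $o(R\|t_w)\neq 0 \Leftrightarrow w\in Tr(R)$ is exactly the special case of that identity where one argument is a single linear trace, so the two arguments are really the same proof at different levels of granularity.
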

Remind that we assume there is only one non-trivial internal action, namely $\tau$, and it satisfies $\tau\tau = \tau$. 
\begin{proof}
Firstly, the language of the automata associated to $x$ is given by
\begin{eqnarray}
Tr(x) & = & \{t\ |\ t \textrm{ is linear, loop-free, has only $\tau$ as} \nonumber \\
	& &\quad \textrm{non-synchronised action and } t\leq_\tau x\}\nonumber
\end{eqnarray}
where $x\leq_\tau y$ if there is a simulation between the automata represented by $x$ and $y$ such that all non-synchronised actions are replaced by $\tau$. This ensures for instance that $Tr(\flip) = \{\tau\}$.

Remind that $Tr(x\| y) = Tr(x) \cap Tr(y)$~\footnote{A small difference from CSP is that we consider only words terminating to final states but since $F_{x\|y} = F_x\times F_y$, we are safe to use  most of the general properties found in CSP such as $$Tr(x\| y) = \{t \ |\ t_{|x}\in Tr(x) \wedge t_{|y} \in Tr(y) \}$$.} because elements of $Tr(x)$ are of the form $w\tau$ or $w$ (modulo the equivalence from $\leq_\tau$) where $w$ is a word formed of synchronised actions only.

For the direct implication, assume $x\may{} y$ and let $t\in Tr(x)$. Then $o(x\|t)\neq 0$ and since $x\|t\leq y\|t$, we have $o(y\| t)\neq 0$. Since $t$ has synchronised actions only (or possibly ends with $\tau$) and $o(y\| t)\neq 0$, then $t\in Tr(y)$ that is $Tr(x)\subseteq Tr(y)$. 

Conversely, let $Tr(x)\subseteq Tr(y)$ and $z\in T_\Sigma$. $Tr(x\|z) = Tr(x)\cap Tr(z)\subseteq Tr(y)\cap Tr(z) = Tr(y\|z)$. So if $o(y\|z) = 0$ then $y\| z$ has no final state and hence $Tr(y\| z) = \{0\}$. Hence $Tr(x\|z) = \{0\}$ i.e. $x\|z$ has no final state that is $o(x\| z) = 0$. 
\end{proof}

\section{Specification of Rabin's Protocol.}

Remind that $P(\alpha,k)$ is the specification of a tourist in from of the door $\alpha\in\{m,c\}$ and has $k$ written on his notepad (Figure \ref{fig:app-rabin}).
\begin{figure}
$$\xymatrix{
&P(\alpha,k)\ar[d]_{\alpha?K}&&\\
&\ar[dl]_{[K=here]}\ar[dr]^{[K\neq here]}&&\\
\bullet	& \ar[l]^{\alpha!here}& \ar[l]_{[k>K]}\ar[dl]^{[k<K]}\ar[d]^{[k=K]}&\\
	& \ar[dl]^{[k:=K]} 			&\ar[d]^{\flip_{1/2}}& \\
\ar[d]_{\alpha!k}& &\ar[dl]_{\tau_h}\ar[dr]^{\tau_t} &\\
\ar[d]_{[\alpha := \underline\alpha]}& \ar[dr]_{[k := K+2]}& &\ar[dl]^{[k:=\overline{K+2}]}\\
\circ&&\ar[d]^{\alpha!k}&\\
& &\ar[d]^{[\alpha := \underline{\alpha}]} &\\
& &\circ &
}
$$
\caption{Interpretation of $P(\alpha,k)$ in term of automata with imiplicit probability.}
\label{fig:app-rabin}
\end{figure}
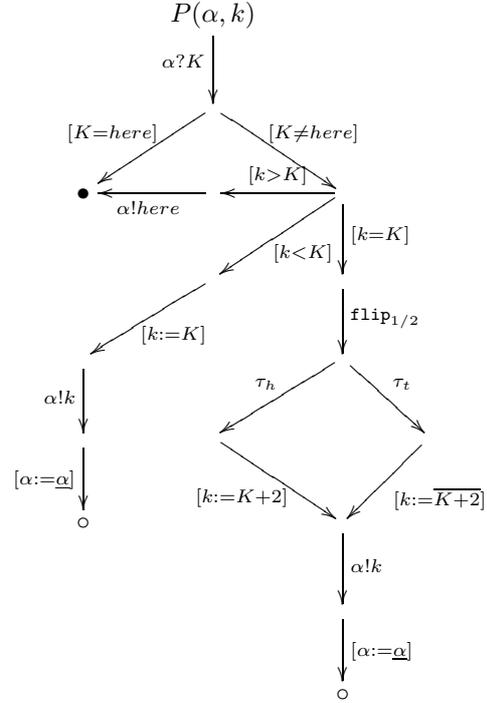

Any action of the form $[a]$ are considered internal. The symbol $\circ$ denotes final states and $\bullet$ is a deadlock state. In this protocol, deadlock state is used to specify that the tourist has come to a decision and the common place would be the value of $\alpha$ when the deadlock state is reached.

\begin{theorem}
In the concrete model, the specification of Rabin's protocol satisfies $$S = [((P+Q)\|M)^*((P+Q)\|C)^*]^*$$
\end{theorem}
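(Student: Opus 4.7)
The $\geq$ direction is precisely Theorem \ref{pro:dupl}, so it remains to establish the reverse simulation $S\leq\mathcal{R}$, where $\mathcal{R}=[((P+Q)\|M)^{*}((P+Q)\|C)^{*}]^{*}$; write $A := P(\alpha,u)+Q(\beta,v)$ for convenience.

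The plan is to exploit continuity of sequential and concurrent composition in $\aut_f$ (Propositions \ref{pro:mult-cont} and \ref{pro:par-continuous}) to reduce to finite approximations, and then to invoke a model-level simulation argument specific to the shape of the protocol. First I would rewrite $(M+C)^{*}=(M^{*}C^{*})^{*}$ using Proposition \ref{pro:elementary-consequences}(4), giving $S=A^{*}\|(M^{*}C^{*})^{*}$. Continuity then presents $S$ as the supremum of the finite approximants $A_n\|(M^{*}C^{*})^{k}$ with $A_n=1+A+\cdots+A^{n}$, so it suffices to show, for every $k\geq 0$, the simulation $A^{*}\|(M^{*}C^{*})^{k}\leq\mathcal{R}$.

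I would handle this by induction on $k$, with the core step being the base inequality $A^{*}\|(M^{*}C^{*})\leq(A\|M)^{*}(A\|C)^{*}$. For this I would construct an explicit rooted $\eta$-simulation at the automaton level. The key protocol-specific observation is that each iteration of the tourist process $A$ performs exactly one external synchronisation (on channel $m$ or $c$, determined by the current value of $\alpha$), while $M$ uses only channel $m$ and $C$ only $c$. Consequently every reachable state of $A^{*}\|M^{*}C^{*}$ decomposes canonically into a tourist state paired with a phase of $M^{*}C^{*}$: any $m$-synchronisation keeps us in the $M^{*}$ phase, any $c$-synchronisation forces the transition into the $C^{*}$ phase, and internal tourist actions interleave harmlessly on either side. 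Matching these transitions with iterations of $A\|M$ or $A\|C$ in the target automaton yields the required simulation, which then iterates through the induction on $k$.

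The hard part is exactly this explicit construction. The desired inequality is close to a converse of the interchange law and cannot be derived from the axioms of weak concurrent Kleene algebra alone: already $A^{*}\|M^{*}\leq(A\|M)^{*}$ fails for arbitrary $A,M$, since the left-hand side admits all interleavings of $A$s with $M$s while the right forces one $M$-round per iteration. The proof must therefore rely essentially on the channel separation between $M$ and $C$ and on the fact that the tourist process performs a single door synchronisation per iteration of $A$. Once the base case is secured, monotonicity and continuity of the operators handle the passage to the supremum.
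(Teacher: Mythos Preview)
Your outline diverges from the paper's argument in a significant way, and the divergence introduces a real gap.

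The paper does not decompose $(M+C)^*$ first. It instead proves directly the stronger intermediate equality
\[
(P+Q)^*\|(M+C)^* \;=\; \bigl((P+Q)\|(M+C)\bigr)^*,
\]
by writing $X=P+Q$, $Y=M+C$, setting $T_{m,n}=(1+X)^m\|(1+Y)^n$, and computing $T_{m,n}=(1+X\|Y)^{\min(m,n)}$ via a double recursion. The recursion is driven not by channel separation but by three protocol-specific \emph{lockstep} identities that you never isolate:
\[
X\!\cdot\! A\;\|\;Y\!\cdot\! B \;=\; (X\|Y)\cdot(A\|B),\qquad X\!\cdot\! A\;\|\;1 = 0,\qquad Y\!\cdot\! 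B\;\|\;1 = 0.
\]
These say that one tourist round synchronises with exactly one place round, and that neither side can progress alone. From $(X\|Y)^*$ the paper then distributes $\|$ over $+$ (again a model-level fact, since only subdistributivity is axiomatic) and applies $(s+t)^*=(s^*t^*)^*$ to reach the stated form.

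Your plan applies $(s+t)^*=(s^*t^*)^*$ on the places side first and then tries induction on $k$ for $A^*\|(M^*C^*)^k$. The base case you sketch is plausible, but the induction step is simply missing: to pass from $k$ to $k{+}1$ you must split $A^*\|(M^*C^*)\cdot(M^*C^*)^k$ along the sequential boundary, and that is exactly a lockstep law of the shape $X\!\cdot\! A\|Y\!\cdot\! B=(X\|Y)(A\|B)$ --- applied now with $Y=M^*C^*$ --- which is strictly stronger than the channel-separation observation you rely on and does not follow from monotonicity or continuity. Even your base inequality $A^*\|M^*C^*\leq(A\|M)^*(A\|C)^*$ already needs the ``one synchronisation per $A$-round'' lockstep, not just the fact that $M$ and $C$ use disjoint channels; channel separation alone cannot prevent one $M$-round from absorbing several $A$-rounds (which is precisely the obstruction you note for $A^*\|M^*\leq(A\|M)^*$). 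So the missing ingredient is the same lockstep property the paper makes explicit and proves once at the level of $X,Y$; without it neither your base case nor your induction step goes through.
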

Firstly, notice that if $\cdot$ is (conditionally) continuous then $x^* = \sup_{n\in\mathbb{N}}(1+x)^n$. The proof relies on the fact that $f_x^n(0) = (1 + x)^n$ where $f_x(y) = 1 + x\cdot y$ and the result follows by taking the limit.

\begin{proof}
The above property allows us to express $x^*$ as the limit of finite iterations of $x$ interleaved with successful termination. We have 
\begin{eqnarray}
(P + Q)^*\|(M+C)^* & = & \sup_m(1 + P + Q)^m\|\sup_n(1 + M + C)^n\nonumber\\
& = & \sup_m\sup_n\left[(1 + P+Q)^m\|(1 + M + C)^n\right]\nonumber
\end{eqnarray}

The processes $P$ and $Q$ are essentially delimited by $\alpha?K$ and $\alpha!K$ which ensures the following properties of the system
\begin{eqnarray}
X \cdot A\|Y\cdot B & = & [X\|Y]\cdot [A\|B]\label{eq:p1}\\
X\cdot A\|1 & = & 0 \label{eq:p2}\\
Y\cdot B\|1 & = & 0\label{eq:p2'}
\end{eqnarray}
for every processes $A,B$ and where $X = P+Q$ is the collection of tourists and $Y = M+C$ is the collection of places.

In particular, 
$$1\|(1 + X)^n = 1\|(1 + X)^{n-1} + 1\|X\cdot(1 + X)^n = 1\|(1 + X)^{n-1}$$
and by induction, since $1\|1 = 1$,
\begin{equation}\label{eq:p3}
1\|(1 + X)^n = 1
\end{equation}
for every $n\in\mathbb{N}$. Similarly, $1\|(1 + Y)^n = 1$.

On the other hand, let us denote $T_{m,n} = (1 + X)^m\|(1 + Y)^n$, then 
\begin{eqnarray}
T_{m,n}& = & \left[(1 +X)^{m-1} + X\cdot(1+X)^{m-1}\right]\|\nonumber\\
 & & \qquad \left[(1 + Y)^{n-1} + Y\cdot (1+Y)^{n-1}\right]\nonumber\\
 & = & T_{m-1,n-1} + X\cdot(1 + X)^{m-1}\|(1 + Y)^{n-1} + \nonumber \\
& & \qquad(1+X)^{m-1}\|Y\cdot (1+Y)^{m-1} + \nonumber \\
& & \qquad [X\|Y]\cdot T_{m-1,n-1}\nonumber\\
& = & (1 + X\|Y)\cdot T_{m-1,n-1} + U_{m-1,n-1} + \nonumber\\
& &\qquad V_{m-1,n-1}\nonumber
\end{eqnarray}
where 
\begin{eqnarray}
U_{m-1,n-1} & = & X\cdot(1 + X)^{m-1}\|[(1 + Y)^{n-1}\nonumber\\
& = & U_{m-1,n-2} + [X\|Y]\cdot T_{m-1,n-2}\nonumber\\
& = & U_{m-1,n-3} +  [X\|Y]\cdot T_{m-1,n-3} + \nonumber\\
& &\qquad [X\|Y]\cdot T_{m-1,n-3}\nonumber
\end{eqnarray}

Since the sequence $(1+Y)^n$ is monotone, $T_{m,n}\leq T_{m,n'}$ for every $n\leq n'$ and therefore $U_{m-1,n-1}\leq U_{m-1,0} + [X\|Y]\cdot T_{m-1,n-1}$. But Property \ref{eq:p2} implies that $U_{m-1,0} = 0$.

Similarly, $V_{m-1,n-1} \leq [X\|Y]\cdot T_{m-1,n-1}$. Hence 
$$T_{m,n} = (1 + [X\|Y])\cdot T_{m-1,n-1}.$$ 
By induction, we show that 
$$T_{m,n} = (1 + [X\|Y])^{\inf(m,n)}$$
because $T_{0,n} = T_{m,0} = 1$ by Equation \ref{eq:p3}.

Finally, we have 
\begin{eqnarray}
X^*\|Y^* & = & \sup_m\sup_n (1+X)^m\|(1+Y)^n\nonumber\\
& = & \sup_n (1 + [X\|Y])^n\nonumber\\
& = & (X\|Y)^*\nonumber
\end{eqnarray}
\end{proof}

\end{document}